\definecolor{reviewer1}{rgb}{0,0,0}
\definecolor{reviewer2}{rgb}{0,0,0}
\definecolor{Reviewer1}{rgb}{0,0,0}
\definecolor{Reviewer2}{rgb}{0,0,0}
\definecolor{myself}{rgb}{0,0,0}
\newcolumntype{m}[1]{>{\centering\arraybackslash}p{#1}}
 \journalname{}
\begin{document}

\title{A Simple Yet Efficient Method of Local False Discovery Rate Estimation Designed for Genome-Wide Association Data Analysis }

\titlerunning{A Simple Yet Efficient Method of Local False Discovery Rate Estimation}        

\author{Ali Karimnezhad}


\institute{A. Karimnezhad  \at Department of Mathematics and Statistics, University of Ottawa, Ottawa, Ontario, Canada \\
\email{a.karimnezhad@uottawa.ca}
       }
\date{Received: date / Accepted: date}

\maketitle

\begin{abstract}
In genome-wide association studies, hundreds of thousands of genetic features (genes, proteins, etc.) in a given case-control population are tested to verify existence of an association between each genetic marker and a specific disease. A popular approach in this regard is to estimate local false discovery rate (LFDR), the posterior probability that the null hypothesis is true, given an observed test statistic. However, the existing LFDR estimation methods in the literature are usually complicated. Assuming a chi-square model with one degree of freedom, which covers many situations in genome-wide association studies, we use the method of moments and introduce a simple, fast and efficient approach for LFDR estimation.  
We perform two different simulation strategies and compare the performance of the proposed approach with three popular LFDR estimation methods. We also examine the practical utility of the proposed method by analyzing a comprehensive 1000 genomes-based genome-wide association data containing approximately 9.4  million single nucleotide polymorphisms, and a microarray data set consisting of genetic expression levels for 6033 genes for prostate cancer patients. The R package implementing the proposed method is available on CRAN \url{https://cran.r-project.org/web/packages/LFDR.MME}.
 \keywords{ Disease association \and Empirical Bayes \and Local false discovery rate \and Method of moments \and Multiple hypothesis testing}
\end{abstract}

\newpage

\section{Introduction}

 Genetic association studies deal with investigating an association between some disease traits and some genetic features, including genes, proteins, lipids and single nucleotide polymorphisms (SNPs). The investigation follows certain strategies to determine whether there exists some kind of statistical association.  In case-control studies, the investigation is started by the determination of  differences between the frequency of alleles or genotypes at genetic marker loci in individuals from a given population.  
Significant differences then reflect strong statistical evidence to claim for existence of association.

In genome-wide association studies (GWAS), $N$ SNPs (with $N$ being usually hundreds of thousands or millions) are genotyped in a given case-control population, \textcolor{reviewer1}{and are tested to verify whether each SNP $i$, $i=1,\ldots,N$, is associated with some disease. In this context, the null hypothesis $H_{0i}$ represents no association between SNP $i$ and the disease, and the alternative hypothesis  $H_{1i}$ represents an association in that population}. For an individual SNP $i$, the classic statistics deals with testing $H_{0i}$ versus $H_{1i}$ by verifying whether a test statistic $x_i$ falls inside some critical region $\mathcal{C}_\alpha$, where $\alpha$ is the significance level (type-I error or false positive rate). As an example, if $x_i$ represents an estimated allelic odds ratio (OR) for SNP $i$, then a critical region may be presented by $\mathcal{C}_\alpha=\{x_i:x_i<\chi_{1;1-\alpha/2}^2\,\text{or}\,x_i>\chi_{1;\alpha/2}^2\}$, where $\chi_{1;1-\gamma}^2$ denotes  $100(1-\gamma)\%$-quantile of the chi-square distribution with one degree of freedom. Alternatively, the single hypothesis problem might be tested by comparing the significance level $\alpha$ with the resulting p-value $p_i$, the smallest value of $\alpha$ such that $x_i\in \mathcal{C}_\alpha$. According to Fisher's scale of evidence for interpreting p-values, \textcolor{reviewer1}{the lower the p-value},  \textcolor{reviewer2}{the greater the evidence against the null hypothesis $H_{0i}$} \citep{Efron2012}. Either way, the procedure is simple and convenient to use, but \textcolor{reviewer2}{when $N$ is large,} the approach leads to a high-rate of false discoveries. 

 To overcome this challenging situation, several improvements on a set of given p-values have been introduced in the literature, see \cite{Sidak1968,Sidak1971}, \cite{Holm1979}, \cite{Simes1986} and \cite{Hochberg1988}. \textcolor{reviewer1}{Another method of p-value adjustment was introduced by \citet{Benjamini1995}. Their approach was designed to control the false discovery rate (FDR) and} led to many developments. For example, \citet{Storey2002}  \textcolor{reviewer2}{developed} a Bayesian approach for estimating FDR and  \citet{Efronetal2001} \textcolor{reviewer2}{outlined} an empirical Bayesian interpretation. The latter  \textcolor{reviewer2}{defined} FDR as the posterior probability that a null hypothesis $H_{0i}$ is true given that an observed test statistic $x_i$ falls within some critical region $\mathcal{C}$, i.e., $P(H_{0i}{\text{ is true}}|\mathcal{C})$. If the critical region $\mathcal{C}$ consists of only one point, FDR is referred to as the local FDR (LFDR), see \citet{Efron2012} and \citet{Marta2012}. 

\textcolor{reviewer1}{In the problem of testing $N$ SNPs simultaneously}, we assume that each SNP $i$, $i=1,\ldots,N$, is unassociated (with some disease) with prior probability $\pi_0$. We further suppose that the test statistic $x_i$ follows a null probability density function (pdf), say $f_0$, with \textcolor{reviewer2}{probability} ${\pi}_0$ and a non-null pdf, say $f_1$, with \textcolor{reviewer2}{probability} $\pi_1=1-\pi_0$. 
Then, LFDR for each SNP $i$ is defined as follows
\begin{equation}\label{LFDR}
\psi_i=P(H_{0i}{\text{ is true}}|x_i)=\frac{\pi_0f_0(x_i)}{f(x_i)},\end{equation}
where $f(x_i)=\pi_0 f_0(x_i) + \pi_1 f_1(x_i)$. In general, $f_0$ is assumed to be known (e.g., standard normal pdf, central chi-square pdf with some known degree(s) of freedom, etc.), $f_1$ is assumed to be a known pdf with some unknown parameter(s) (e.g., normal pdf with unknown mean and/or unknown variance,  chi-square pdf with some known degree(s) of freedom with some unknown non-centrality parameter, etc.), and $\pi_0$ is an unknown parameter. Such unknown parameters need to be estimated before making any inference. Estimated values are then replaced in equation (\ref{LFDR}) and the resulting estimated LFDR, say $\hat\psi_i$,  is compared to some pre-determined threshold. SNPs not passing the pre-specified threshold are deemed to be associated with the underlying disease.

Different strategies have been used for LFDR estimation in the literature. \citet{pan2003mixture} and \citet{Efron2004,Efron2007} perform the estimation task using a discrete mixture model, and \citet{Muralidharan2010}, \citet{Marta2012} and \citet{Yang2013}
use the maximum likelihood (ML) approach. \citet{Bickel2013} summarizes strengths and weakness of classic and empirical
Bayes estimation approaches.

\textcolor{Reviewer2}{In this paper, we assume} a multiple hypothesis testing problem in which $N$ SNPs in a given case-control population are tested. For each SNP $i$, $i=1,\ldots,N$, the null hypothesis $H_{0i}$ represents  that there is no association between SNP $i$ and a certain disease. We assume that $\pi_0$, the true proportion of unassociated SNPs, is unknown.  \textcolor{reviewer2}{Further, we suppose that $f_0$ represents a central chi-square pdf with one degree of freedom. 
We also assume that $f_1$ represents a chi-square pdf with one degree of freedom and an unknown non-centrality parameter $\lambda$}. We will discuss in the forthcoming section that these assumptions are not restrictive and are made in many genetic association studies.

\textcolor{reviewer1}{\citet{Efron2004} introduces a histogram-based (HB) method of LFDR estimation, in which $f_0$ in equation (\ref{LFDR}) is assumed to follow a standard (or empirical) normal distribution and the mixture density $f$ is estimated by a Poisson regression model. The HB approach is fast and does not require independency of test statistics. However, it has been designed to perform well when $\pi_0$ is higher than 0.9 only.} Being concentrated on the chi-square model, \citet{Marta2012} as well as \citet{karimnezhad2020incorporating}  assume that the test statistics $x_i$, $i=1,\ldots,N$, are independent and estimate the corresponding LFDR by using the ML approach, i.e., 
\begin{equation}\label{Estimated.LFDR}
\hat{\psi}_i=\frac{\hat{\pi}_0f_0(x_i)}{\hat{\pi}_0f_0(x_i)+(1-\hat{\pi}_0)f_{\hat{\lambda}}(x_i)},\end{equation}
where 
\begin{equation}(\hat{\pi}_0,\hat{\lambda})=\arg\max_ {\pi_0\in[0,1],\lambda\in[c,d]}\prod_{i=1}^N \left(\pi_0f_0(x_i)+(1-\pi_0)f_\lambda(x_i)\right),\label{ML_arg}
\end{equation}
$f_0$ represents the chi-square pdf with one degree of freedom, $f_\lambda$ represents the chi-square pdf with one degree of freedom and non-centrality parameter $\lambda$, and $c$ and $d$ are known positive bounds. Although this approach leads to somehow sensible estimators, the independency assumption might be unrealistic in genetic association studies due to linkage disequilibrium (LD), \textcolor{reviewer1}{which refers to a nonrandom association of alleles at two or more loci. For more information regarding the concept of LD and its measurement, one may refer to \cite{Slatkin2008} and \cite{Zheng2012} among many others}. The resulting LFDR estimates highly depend on the bounds of $c$ and $d$, and inappropriate choices for these bounds can negatively affect the estimation precision. As another downside, this estimation procedure is time-consuming, and the processing time increases with the number of SNPs as well as the length on the interval $[c,d]$. Alternative to these approaches, we provide a simple yet efficient algorithm that estimates LFDRs without assuming independency. We show that the resulting estimator has a high precision as long as $N$, the number of SNPs to be tested, is large. \textcolor{reviewer1}{Not only the proposed approach is fast, but also it provides an explicit} form of estimators of the proportion rate $\pi_0$ and the non-centrality parameter $\lambda$. As a result, unlike many algorithms including the ML approach of  \citet{Marta2012} and  the HB approach of \citet{Efron2004,Efron2012}, it provides an explicit form of the corresponding LFDR estimator.

The structure of this paper is as follows. \textcolor{reviewer1}{In Section \ref{sec:2}, we briefly review some basic terms and concepts in genetics and genetic epidemiology.} In Section \ref{sec:3}, we quickly review frequent measures for genetic association studies used in the literature. We discuss that many genetic association studies reduce to a multiple hypothesis testing problem in a chi-square model.  In Section \ref{sec:4}, we present our proposed empirical Bayes approach.  Section \ref{sec:5} is devoted to evaluating performance of the proposed approach. We follow two different simulation strategies and use the mean squared error (MSE) as a common measure of performance. In Section \ref{sec:6}, we apply the proposed approach and analyze two different sets of real data, including a set of coronary artery disease data and another set of microarray data.  \textcolor{reviewer2}{Finally, we provide} some discussion and concluding remarks in Section \ref{sec:7}.

\section{Basic Notions}\label{sec:2}
\textcolor{reviewer1}{In this section, we briefly review some basic terms and definitions in genetics and genetic epidemiology.}

\textcolor{reviewer1}{Human beings have normally 23 pairs of chromosomes in each cell nucleus. Each chromosome consists of a molecule of deoxyribonucleic acid (DNA). DNA is made up of a long sequence of four different nucleotides, namely adenine ($A$), thymine ($T$), guanine ($G$) and cytosine ($C$). A gene is a series of DNA sequences that contain genetic information. A locus is the location of a gene or any DNA sequence on a chromosome pair. When the location of a gene or DNA sequence on a chromosome is known, and that sequence varies in the population, it is also called a genetic marker. A genetic marker can be used to identify individuals and can be described as a variation that can be observed. An allele is an alternative DNA sequence that can occur at a particular locus. Since chromosomes are present in pairs, a person's gene or marker at a given locus has two alleles, one on each chromosome. In the population, however, a gene or marker could have more than two alleles. The combination of alleles that an individual possesses for a specific gene is referred to as their genotype. In this paper, we focus on diallelic markers, which have only two alleles in the population. A SNP is a commonly used diallelic marker that varies in individuals owing to the difference of a single nucleotide ($A$, $T$, $C$, or $G$) in the DNA sequence. }

We consider a diallelic marker locus with a typical allele (wild-type) $A$ and the alternative (risk) allele $B$, and let $p$ represent the frequency of the risk allele, i.e., $p=P(B)$ and $1-p=P(A)$.  We also denote the corresponding genotypes by $G_0=AA$, $G_1=AB$ and $G_2=BB$ (we do not distinguish between $AB$ and $BA$). Then, genotype frequencies in the population are given by  $g_j=P(G_j)$, $j=0,1,2$, where $g_0=(1-p)^2+p(1-p)F$, $g_1=2p(1-p)(1-F)$ and $g_2=p^2+p(1-p)F$, in which $F$ is Wright's coefficient of inbreeding. For humans, $F$ is usually taken to be between 0 and 0.05. Under Hardy-Weinberg (HW) equilibrium, $F=0$. Thus, when HW proportions hold in the population, $g_0=(1-p)^2$, $g_1=2p(1-p)$ and $g_2=p^2$. Now, let the prevalence of the disease be $k=P(\text{case})$, and for  $j=0,1,2$, define $ v_j = P(\text{case} |G_j ) $, the probability of having a disease given a specific genotype at the marker for genotype $G_j$. Obviously, $k=\sum_{j=0}^2 v_jg_j$. Depending on a chosen genetic model, the penetrances $v_j$ have certain relationships with themselves. If the genetic model is additive, then $v_1=\frac{v_0+v_2}{2}$. For recessive, multiplicative and dominant models, $v_1=v_0$, $v_1=\sqrt{v_0 v_2}$ and $v_1=v_2$, respectively, see  \citet{Zheng2012} for more details.

\section{Common Measures for Association}\label{sec:3}
Genetic association is usually measured for each individual SNP separately. The data for each SNP can be summarized in a contingency table of either genotype counts or allele counts by disease status (case or control). Table \ref{Tab1} represents genotype counts at marker $M$ based on a sample of $r$ cases and $s$ controls. Intuitively, the \textcolor{reviewer2}{probability} of observing genotype $G_j$, $j=0,1,2$, provided that an individual is known to belong to the case group is estimated by $\frac{r_j}{r}$, which is in fact the ML estimate of  $p_j = P(G_j | \text{case})$. Similarly, $q_j = P(G_j | \text{control})$ is the true \textcolor{reviewer2}{probability} of having genotype $G_j$ for a control individual which, with the notations in Table 1, is estimated by \textcolor{reviewer2}{$\frac{s_j}{s}$}. It is easy to verify using the Bayes principle that $p_j=\frac{v_jg_j}{k}$ and $q_j=\frac{(1-v_j)g_j}{1-k}$. Table \ref{Tab2} represents allele counts  at marker $M$ based on a sample of $r$ cases and $s$ controls.

\begin{center}
\tabcolsep=0.5cm
\begin{table}[h]
\caption{A typical 2$\times$3 table with \textcolor{reviewer2}{$r$ cases and $s$ controls.}
}
\label{Tab1}
\begin{footnotesize}
\begin{center}
\begin{tabular}{c|ccc|c}
\hline 
 & $AA$ & $AB$ & $BB$ & $Total$\tabularnewline
\hline

Case & $r_{0}$ & $r_{1}$ & $r_{2}$ & $r$\tabularnewline

Control & $s_{0}$ & $s_{1}$ & $s_{2}$ & $s$\tabularnewline
\hline 
Total & $n_{0}$ & $n_{1}$ & $n_{2}$ & $n$\tabularnewline

\hline
\end{tabular}
\end{center}
\end{footnotesize}
\end{table}
\end{center}

\begin{center}
\tabcolsep=0.5cm
\begin{table}[h]
\caption{Typical allele counts of case-control samples for a single marker.}
\label{Tab2}
\begin{footnotesize}
\begin{center}
\begin{tabular}{c|cc|c}
\hline 
 & $A$ & $B$ & $Total$\tabularnewline
\hline
Case & $2r_{0}+r_1$ & $2r_{2}+r_1$ & $\textcolor{reviewer2}{2r}$\tabularnewline

Control & $2s_{0}+s_{1}$ & $2s_{2}+s_1$ & $\textcolor{reviewer2}{2s}$\tabularnewline
\hline 
Total & $2n_{0}+n_1$ & $2n_{2}+n_1$  & $\textcolor{reviewer2}{2n}$\tabularnewline
\hline
\end{tabular}
\end{center}
\end{footnotesize}
\end{table}
\end{center}

Contingency tables play a key role in summarizing genetic association data. Under the null hypothesis that there is no association between genotypes in Table \ref{Tab1} (or alleles  in Table \ref{Tab2}) 
and the disease, the same relative genotype (or allele) frequencies in both case and control groups are expected. 
 Perhaps Pearson's test is the most convenient association test in contingency tables. Pearson's test statistic for the genotypes in Table \ref{Tab1} can be presented by
\begin{equation*}\label{Pearson}
 T_{P}=\sum_{j=0}^2\frac{(r_j-n_jr/n)^2}{n_jr/n}+\sum_{j=0}^2\frac{(s_j-n_js/n)^2}{n_js/n},
\end{equation*}
which approximately follows a chi-square distribution with two degrees of freedom. In the same way, the test can be applied to allele counts in Table \ref{Tab2}, and the resulting test statistic approximately follows a chi-square distribution with one degree of freedom, see \citet{Clarke2011} and \citet{Zheng2012} among many others.
In fact, any test statistic in genetic association studies by means of a contingency table follows a chi-square distribution with at most two degrees of freedom, see Table 2 of \citet{Clarke2011}. Readers may also refer to \citet{Clarke2011} for a summary of strengths and weaknesses of allelic and genotypic tests of association, as well as the differences between the different models of penetrance.  \textcolor{reviewer2}{There are other contingency table-based association tests in the literature. For example, Cochran-Armitage trend test is used in situations where some kind of trend in risk of developing the disease with increasing number of the risk allele in three genotypes is determined. See  \citet{Clarke2011} and \citet{Zheng2012} for more details}.

\textcolor{reviewer2}{Odds ratio (OR) is another common measure of association between genotypes and diseases. It compares the odds of disease in an individual carrying one genotype to the odds of disease in an individual carrying a different genotype. For a diallelic marker, two genotypic ORs can be defined as given below
\begin{equation}
OR_{j}=\frac{v_j(1-v_0)}{v_0(1-v_j)},\quad j=1,2.\label{eq:OR-1}
\end{equation}
$OR_1$ compares the odds of disease between individuals carrying genotype $AB$ and those carrying $AA$, and $OR_2$ compares the odds of disease between individuals carrying genotype $BB$ and those carrying $AA$. In case of no association, $OR_1=OR_2=1$. The corresponding test statistic, say $T_{OR_{j}}$, follows a standard normal distribution. If so, $T_{OR_{j}}^2\sim\chi_1^2$.  An OR can also be defined for the allele counts in Table \ref{Tab2}, and similarly, the corresponding test statistic follows a chi-square distribution with one degree of freedom.}

A more flexible analysis for GWAS is based on the logistic regression model. Let the binary random variable $Y_{i}$ represent whether \emph{i}th individual belongs to the case group
($Y_{i}=1$) or the control group ($Y_{i}=0$), and let $X_{i}$ denote the genotype of individual $i$ for an arbitrary SNP so that $X_{i}(G_0)=0$, $X_{i}(G_1)=1$ and $X_{i}(G_2)=2$. Then, the logistic model is defined as $\ln \frac{\theta_{i}}{1-\theta_{i}}=\beta_{0}+\beta_{1} X_{i}$, where $\theta_{i}=E[Y_{i}|X_{i}]$ is the expected value of phenotype given a genotype for an arbitrary SNP. With this setting, the multiple hypothesis testing problem reduces to testing $H_{0i}:\beta_{1}=0$ versus $H_{1i}:\beta_{1}\neq 0$, $i=1,\ldots,N$. The corresponding test statistic is computed by $T_{L}=\frac{\hat{\beta}_{1}}{\sqrt{\widehat{Var}(\widehat{\beta}_{1})}}$.  Under the null hypothesis of no association, it follows that 
$T_{L}\sim N(0,1)$ or equivalently  $T_{L}^2\sim \chi_1^2$. For more details see \citet{
Marta2012}, \citet{Yang2013} and \citet{karimnezhad2020incorporating}. 

\section{A novel empirical Bayesian method}\label{sec:4}
As discussed in the preceding section, many genetic association studies reduce to a multiple hypothesis testing problem in which the corresponding test statistics follow a chi-square distribution. \textcolor{reviewer2}{To perform this hypothesis test}, we propose to apply and  estimate LFDRs by using a simple and efficient empirical Bayes approach, as we present below.

Suppose that $N$ SNPs have been genotyped in a case-control population, and that the goal is to test the null hypothesis $H_{0i}$, $i=1,\ldots,N$, indicating that there is no association between SNP $i$ and the disease, \textcolor{reviewer1}{versus} its alternative hypothesis  $H_{1i}$. Suppose that for each SNP $i$, the test statistic $x_i$ has already been computed using any of the approaches reviewed in the preceding section. \textcolor{reviewer2}{Then, define an indicator variable $\mu_{i}$ such that $\mu_{i}=0$ when $H_{0i}$ is true and $\mu_{i}=1$ when $H_{1i}$ is true. Also, let $\pi_0\in[0,1]$ be the true proportion of SNPs not associated with the disease. If so, $\pi_0=P(\mu_{i}=0)$. In fact, the indicator variable $\mu_i$ assigns the \textcolor{reviewer2}{probability} $\pi_0$ to each null hypothesis to be true. }
Define $\theta$ to be a two-state variable so that it takes 0 if $\mu_i=0$, and takes a positive value $\lambda$ if $\mu_i=1$. 
Further, \textcolor{reviewer2}{assume that the test statistic $X_i$ follows $\chi_{1,\theta}^2$, the chi-square distribution with one degree of freedom and non-centrality parameter $\theta$}.  
This model can be expressed by the following hierarchical model
\[
\begin{cases}
\textcolor{reviewer2}{X_{i}|\theta\sim \chi_{1,\theta}^2},\quad i=1,\ldots,N,\\
\theta|\mu_{i}\sim {\mu_i}\delta_{\mu_i}+\lambda\delta_{1-\mu_{i}},\\
\mu_{i}\sim Bernoulli(1-\pi_{0}),
\end{cases}
\]
where \begin{equation*}
\delta_a=
\begin{cases}
0 \quad \text{if }a\neq0,\\
1 \quad \text{if } a=0.
\end{cases}
\end{equation*}
It is interesting to mention that a similar hierarchical Bayes model has already been applied in detecting variants in the analysis of next generation sequencing data, see \cite{Zhao2013}.

 \textcolor{reviewer2}{Focusing on the chi-square distribution with one degree of freedom, we obtain the following simplified LFDR estimation.}
\textcolor{reviewer2}{\begin{theorem}\label{thm1} Let $X_{i}$, $i=1,\ldots,N$, follow the chi-square distribution with one degree of freedom and the non-centrality parameter $\lambda$. \begin{itemize}
\item [{(i)}] The pdf of an observation $x_i$ can be expressed by
 \textcolor{reviewer2}{\begin{equation}\label{f_lambda}
f_{\lambda}(x_{i})=\frac{1}{2\sqrt{2\pi x_{i}}}\left(e^{-\frac{1}{2}(\sqrt{x_{i}}-\sqrt\lambda)^{2}}+e^{-\frac{1}{2}(\sqrt{x_{i}}+\sqrt\lambda)^{2}}\right).
\end{equation}}
\item [{(ii)}] Let $x_{i}$ be an observation from the mixture pdf $f(x_{i})=\pi_{0}f_{0}(x_{i})+(1-\pi_{0})f_{\lambda}(x_{i})$.
Then, the LFDR based on observing $x_{i}$ is given by 
\begin{equation}\label{simplified.LFDR}
\psi(x_{i})=\pi_{0}\left(\pi_{0}+(1-\pi_{0})e^{-\frac{1}{2}\lambda}\cosh(\sqrt{\lambda x_{i}})\right)^{-1}.
\end{equation}
\item [{(iii)}] $\psi(x_{i})$ is a strictly decreasing function of $x_i$, and 
\[\lim_{x_i\rightarrow 0}\psi(x_{i})=\frac{\pi_{0}}{\pi_{0}+(1-\pi_{0})e^{-\frac{1}{2}\lambda}},\quad
\lim_{x_i\rightarrow \infty}\psi(x_{i})=0.\]
\item [{(iv)}] For a given threshold $u\in[0,1]$, define $k_{u}(\pi_{0},\lambda)= \frac{\pi_0}{1-\pi_0}  \frac{1-u}{u} e^{\frac{\lambda}{2}}$. Then,  $\psi(x_{i})< u$ if and only if $x_{i}> h_{u}(\pi_{0},\lambda)$,
where 
\begin{equation}\label{hu}
h_{u}(\pi_{0},\lambda)=
\begin{cases}
0 & \mathrm{if \,} k_{u}(\pi_{0},\lambda) \leq 1,\\
\frac{1}{\lambda}\left[\ln\left(k_{u}(\pi_{0},\lambda)+\sqrt{k_{u}^{2}(\pi_{0},\lambda)-1}\right)\right]^{2} &  \mathrm{if \,} k_{u}(\pi_{0},\lambda) > 1.
\end{cases}
\end{equation}
 \end{itemize}
\end{theorem}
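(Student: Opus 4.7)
The plan is to handle the four parts in sequence, since each builds on the previous one. For part (i), I would start from the standard construction of the noncentral chi-square with one degree of freedom: if $Y \sim N(\sqrt{\lambda}, 1)$, then $Y^2$ has exactly this distribution. I would apply the change-of-variables formula for a non-monotone transformation on $(-\infty, 0) \cup (0, \infty)$, noting that $x = y^2$ is hit by both $y = +\sqrt{x}$ and $y = -\sqrt{x}$, each with Jacobian $1/(2\sqrt{x})$. Summing the two contributions of the normal density $\phi(y - \sqrt{\lambda})$ immediately yields the stated expression for $f_{\lambda}(x_i)$, with the two exponentials corresponding to the two square roots. No obstacle here beyond keeping the signs straight.

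For part (ii), my strategy would be to first simplify $f_\lambda(x_i)$ by expanding both squares in the exponents. After factoring out the common term $e^{-x_i/2} e^{-\lambda/2}$, the sum inside the parentheses collapses to $e^{\sqrt{x_i \lambda}} + e^{-\sqrt{x_i \lambda}} = 2\cosh(\sqrt{x_i \lambda})$. This yields the clean identity $f_\lambda(x_i) = f_0(x_i) \, e^{-\lambda/2} \cosh(\sqrt{x_i \lambda})$, where $f_0$ corresponds to $\lambda = 0$. Substituting this ratio into equation (\ref{LFDR}) and dividing numerator and denominator by $f_0(x_i)$ gives the desired form (\ref{simplified.LFDR}). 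This is the key algebraic step, and while it is routine, getting the $\cosh$ representation is what powers the rest of the theorem.

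For part (iii), I would observe that the only $x_i$-dependence in $\psi(x_i)$ sits inside $\cosh(\sqrt{\lambda x_i})$ in the denominator. Since $\sqrt{\lambda x_i}$ is strictly increasing on $(0, \infty)$ and $\cosh$ is strictly increasing on $[0, \infty)$, the denominator is strictly increasing and hence $\psi(x_i)$ is strictly decreasing. Plugging in $x_i = 0$ gives $\cosh(0) = 1$, producing the first limit, and $\cosh(\sqrt{\lambda x_i}) \to \infty$ as $x_i \to \infty$ forces $\psi(x_i) \to 0$.

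For part (iv), I would rearrange the inequality $\psi(x_i) < u$ algebraically: moving $\pi_0$ across and collecting terms yields $\cosh(\sqrt{\lambda x_i}) > k_u(\pi_0, \lambda)$. Two cases arise: if $k_u(\pi_0, \lambda) \le 1$, the inequality holds for every $x_i > 0$ since $\cosh \ge 1$, so the threshold is $h_u = 0$; if $k_u(\pi_0, \lambda) > 1$, I would invert $\cosh$ using the identity $\cosh^{-1}(z) = \ln\bigl(z + \sqrt{z^2 - 1}\bigr)$ for $z \ge 1$, then square and divide by $\lambda$ to solve for $x_i$. The main obstacle across the whole theorem is the correct use of $\cosh^{-1}$ in this last step, since one must verify that the solution branch is taken on the positive real line and that the case split on $k_u(\pi_0, \lambda)$ is exhaustive; everything else is bookkeeping following from the $\cosh$ identity established in part (ii).
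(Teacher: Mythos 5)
Your proposal is correct and follows essentially the same route as the paper: the CDF--differentiation argument for part (i) is just your two-branch change of variables written out, the identity $f_\lambda(x_i)=f_0(x_i)\,e^{-\lambda/2}\cosh(\sqrt{\lambda x_i})$ is the same key step powering (ii) and (iii), and the reduction of (iv) to $\cosh(\sqrt{\lambda x_i})>k_u(\pi_0,\lambda)$ is identical. The only difference is cosmetic: where you invoke the closed form $\cosh^{-1}(z)=\ln\left(z+\sqrt{z^2-1}\right)$ for $z\ge 1$, the paper re-derives it by substituting $w_i=e^{\sqrt{\lambda x_i}}$ and analyzing the quadratic $w_i^2-2k_u(\pi_0,\lambda)w_i+1$ through its discriminant and admissible roots, which is the same computation in longer form.
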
}
\begin{proof} \textcolor{reviewer2}{(i) Suppose that a random variable $Y_i$ follows a normal distribution with mean $\sqrt \lambda$ and unity variance. Then, it is well-known that $X_i=Y_i^2$ follows a non-central chi-square distribution with one degree of freedom and non-centrality parameter $\lambda$, see \cite{Shao2007} among many others. Let $F_{X_i}(\cdot|\lambda)$ and $F_{Y_i}(\cdot|\lambda)$ denote the cumulative distribution function of $X_i$ and $Y_i$ given $\lambda$, respectively. Then, it is easy to verify that 
\[
F_{X_i}(x_{i}|\lambda)=F_{Y_i}(\sqrt{x_{i}}|\lambda)-F_{Y_i}(-\sqrt{x_{i}}|\lambda).
\]
Now differentiating both sides w.r.t. $x_i$ results in (\ref{f_lambda}).\\(ii) The LFDR in (\ref{simplified.LFDR}) is derived using equation (\ref{LFDR}) with $f(x_{i})=\pi_{0}f_{0}(x_{i})+(1-\pi_{0})f_{\lambda}(x_{i})$ in which $f_{0}(x_{i})$ and $f_{\lambda}(x_{i})$ are substituted from part (i).\\
(iii) For a fixed $\lambda>0$, when $x_i$ increases  from 0 to $\infty$,  $\cosh(\sqrt{\lambda x_{i}})$ strictly increases from 1 to $\infty$. Thus, $\psi(x_i)$ strictly decreases from $\pi_{0}\left(\pi_{0}+(1-\pi_{0})e^{-\frac{1}{2}\lambda}\right)^{-1}$ to 0. \\
(iv) From (\ref{simplified.LFDR}) notice that
\begin{align}\label{psi_cosh}
\psi(x_{i})< u \iff & \frac{\pi_0}{1-\pi_0}  \frac{1-u}{u} e^{\frac{\lambda}{2}}< \cosh(\sqrt{\lambda x_{i}})\nonumber\\
\iff & k_{u}(\pi_{0},\lambda)< \frac{w_i+w_i^{-1}}{2}
\end{align}
where $k_{u}(\pi_{0},\lambda)=\frac{\pi_0}{1-\pi_0}  \frac{1-u}{u} e^{\frac{\lambda}{2}}$ and $w_i=e^{\sqrt{\lambda x_{i}}}$ for simplicity. Note that $k_{u}(\pi_{0},\lambda)\in[0,\infty)$, and since $x_i$ only takes on positive values, $w_i$ has to fall in the interval $(1,\infty)$. By multiplying both sides of the last inequality in (\ref{psi_cosh}) by $w_i$, we observe that $\psi(x_{i})< u$ if and only if  $g(w_i)=w_i^2-2k_{u}(\pi_{0},\lambda)w_i+1$, as a quadratic function of $w_i$, is positive. Now, we need to find values of $w_i$ for which $g(w_i)>0$. It is easy to verify that $g(w_i)$ has a unique minimum at $w_i=k_{u}(\pi_{0},\lambda)$ but the number of possible roots needs to be verified according to the sign of the discriminant of $g(w_i)$, i.e., $\Delta=4(k_{u}^2(\pi_{0},\lambda)-1)$. Depending on the value of $k_{u}(\pi_{0},\lambda)$, one of the following three cases may happen. \\
Case 1. If  $0\leq k_{u}(\pi_{0},\lambda)<1$, then $\Delta<0$ and  $g(w_i)$ has no roots. The fact that the quadratic function $g(w_i)$ has a unique minimum at $w_i=k_{u}(\pi_{0},\lambda)$ yields that for every $w_i\in(1,\infty)$, $g(w_i)> g(k_{u}(\pi_{0},\lambda))=1-k_{u}^2(\pi_{0},\lambda)>0$. Thus, for every $x_i>0$, $\psi(x_{i})< u$. \\
Case 2. If $k_{u}(\pi_{0},\lambda)=1$, then $\Delta=0$. This implies that $g(w_i)$ may have a unique root at $w_i=k_{u}(\pi_{0},\lambda)=1$ but this does not occur as $w_i\notin(1,\infty)$. Thus, for every value of $w_i\in (1,\infty)$, $g(w_i)> g(1)=0$, or equivalently, for every $x_i>0$, $\psi(x_{i})< u$. \\
Case 3. If   $k_{u}(\pi_{0},\lambda)>1$, then $\Delta>0$ and  $g(w_i)$ has two roots at $w_{i,1}=k_{u}(\pi_{0},\lambda)-\sqrt{k_{u}^2(\pi_{0},\lambda)-1}$ and $w_{i,2}=k_{u}(\pi_{0},\lambda)+\sqrt{k_{u}^2(\pi_{0},\lambda)-1}$. But $w_{i,1}$ is an unacceptable root, as it is negative. Thus, the only eligible root is $w_i=w_{i,2}$. Since for every $w_i> k_{u}(\pi_{0},\lambda)$, $g(w_i)$ is strictly increasing, it is concluded that for every $w_i> w_{i,2}> k_{u}(\pi_{0},\lambda)$, $g(w_i)> g(w_{i,2})=0$, or equivalently, for every $x_i>\frac{1}{\lambda}\left[\ln\left(k_{u}(\pi_{0},\lambda)+\sqrt{k_{u}^2(\pi_{0},\lambda)-1}\right)\right]^2$, $\psi(x_{i})< u$. }
\qed\end{proof}

Equation (\ref{simplified.LFDR}) is in fact a simplified version of equation (\ref{LFDR}) which is applicable to genetic association studies. To estimate the LFDR, the parameters $\pi_0$ and $\lambda$ in (\ref{simplified.LFDR}) need to be estimated. In this regard, we propose the method of moments (MM) estimation, which suggests that unknown parameters in a model should be estimated by matching theoretical moments with the appropriate sample moments \citep{Matyas1999}. 

\begin{theorem}\label{thm2} With the setting of Theorem \ref{thm1}, let $m_{1}=\frac{1}{N}\sum_{i}X_{i}$ and $m_{2}=\frac{1}{N}\sum_{i}X_{i}^{2}$ represent the first and the second moments, respectively. Then, MM estimators of $\lambda$ and $\pi_0$ are respectively given by 
\begin{equation}\label{p0.hat}\hat{\lambda}=\frac{m_{2}-3}{m_{1}-1}-6,\quad 
\widehat{\pi}_{0}=1-\frac{m_{1}-1}{\hat{\lambda}}.
\end{equation}
\end{theorem}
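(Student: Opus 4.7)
The plan is to derive the two moment equations explicitly and invert them in closed form. First, I would assemble the population moments of the non-central chi-square with one degree of freedom and non-centrality $\lambda$. Using the representation $X_i = Y_i^2$ with $Y_i \sim N(\sqrt{\lambda},1)$ already invoked in the proof of Theorem~\ref{thm1}(i), I get $E[X_i \mid \mu_i=1] = 1+\lambda$ and $\mathrm{Var}(X_i \mid \mu_i=1) = 2(1+2\lambda)$, so $E[X_i^2 \mid \mu_i=1] = 2(1+2\lambda)+(1+\lambda)^2 = 3 + 6\lambda + \lambda^2$. Specialising to $\lambda = 0$ recovers $E[X_i\mid\mu_i=0]=1$ and $E[X_i^2\mid\mu_i=0]=3$ for the null component.

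Next, I would combine these into mixture moments by conditioning on $\mu_i$. Linearity gives
\begin{equation*}
E[X_i] \;=\; \pi_0\cdot 1 + (1-\pi_0)(1+\lambda) \;=\; 1 + (1-\pi_0)\lambda,
\end{equation*}
and
\begin{equation*}
E[X_i^2] \;=\; 3\pi_0 + (1-\pi_0)(3 + 6\lambda + \lambda^2) \;=\; 3 + (1-\pi_0)\lambda(6+\lambda).
\end{equation*}
The method of moments prescribes matching these with $m_1$ and $m_2$, producing the system
\begin{equation*}
m_1 - 1 \;=\; (1-\pi_0)\lambda, \qquad m_2 - 3 \;=\; (1-\pi_0)\lambda(6+\lambda).
\end{equation*}

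Finally, I would solve this $2\times 2$ nonlinear system. Substituting the first identity into the second eliminates $\pi_0$ and leaves $(m_1-1)(6+\lambda) = m_2 - 3$, which upon dividing by $m_1-1$ yields the stated formula $\hat\lambda = \frac{m_2-3}{m_1-1} - 6$. Then rearranging the first moment equation gives $\hat\pi_0 = 1 - \frac{m_1-1}{\hat\lambda}$, matching~(\ref{p0.hat}).

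The derivation is entirely algebraic, so there is no real conceptual obstacle; the only step requiring care is obtaining $E[X_i^2]$ for the non-central component correctly (it is easy to drop a factor of $2$ or mis-expand $(1+\lambda)^2$). I would also note, implicitly, that the formulas assume $m_1 > 1$ so that division by $m_1 - 1$ and by $\hat\lambda$ is legitimate; this is the natural empirical analogue of the identifiability condition $(1-\pi_0)\lambda>0$.
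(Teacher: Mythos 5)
Your proposal is correct and follows essentially the same route as the paper: compute $E[X_i]$ and $E[X_i^2]$ by conditioning on $\mu_i$ (using $E[X_i\mid\theta]=1+\theta$ and $E[X_i^2\mid\theta]=(\theta+1)^2+2(1+2\theta)$, which is exactly your $3+6\lambda+\lambda^2$ for the non-null component), then equate with $m_1,m_2$ and solve the resulting two-equation system. Your closing caveat about $m_1>1$ is a sensible observation the paper does not make explicit, but otherwise the arguments coincide.
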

\begin{proof}
By using the properties of conditional expectation, observe that 
\begin{align*}
E[X_{i}]  &=E_{\theta}\left[E[X_{i}|\theta]\right]\\
 & =1+E_{\mu_{i}}\left[E[\theta|\mu_{i}]\right]\\
  &=1+(1-\pi_{0})E_{\theta}[\theta|\mu_{i}=1],
\end{align*}
and
\begin{align*}
E[X_{i}^{2}]  &=E_{\theta}\left[E[X_{i}^{2}|\theta]\right]\\
 & =E_{\mu_{i}}\left[E_{\theta}[(\theta+1)^{2}+2(1+2\theta)|\mu_{i}]\right]\\
 &=6m_{1}-3+(1-\pi_{0})E_{\theta}[\theta^{2}|\mu_{i}=1].
\end{align*}
Note that $E_{\theta}[\theta|\mu_{i}=1]=E_{\theta}[\theta^{2}|\mu_{i}=1]=\lambda$. Now, equating the above expectations with the first and second moments leads to (\ref{p0.hat}).
\qed
\end{proof}

Consistency of the MM estimators  \citep{Matyas1999} guarantees that  $\widehat\lambda$ and  $\widehat\pi_0$  converge in probability to  $\lambda$ and  $\pi_0$, respectively. We show in the next section that $\widehat\lambda$ and  $\widehat\pi_0$ estimate the true parameters  $\lambda$ and  $\pi_0$ very well. 

To make an inference regarding association between $i$th SNP and the disease, one may compute the estimated LFDR ${\widehat\psi}_i$ by replacing estimates of $\lambda$ and  $\pi_0$  from equation (\ref{p0.hat}) into equation (\ref{simplified.LFDR}). Therefore, if for a given threshold $u$, $\widehat\psi _i<u$, the null hypothesis $H_{0i}$ is rejected. Otherwise, there is no evidence to conclude an association. An alternative approach would be to replace estimates of $\lambda$ and  $\pi_0$  from equation (\ref{p0.hat}) into $h_u(\pi_0,\lambda)$ in equation (\ref{hu}). Then, $H_{0i}$ is rejected only if the test statistic $x_i$ is greater than $h_u(\widehat\pi_0,\widehat\lambda)$. The second approach is simpler and more convenient, and unlike the existing methods in the literature, it allows for performing multiple hypothesis testing by just comparing each of the test statistics $x_i$ with a purely data-based \textcolor{reviewer2}{threshold,} i.e., $h_u(\widehat\pi_0,\widehat\lambda)$.

The threshold $u$ in (\ref{hu}) can be chosen according to a subjective belief. A conventional choice would be to choose $u=0.2$ to identifying ``interesting cases",  see \citet{Efron2012}.  It can also be chosen based on an objective belief. In this regard, \citet{karimnezhad2020incorporating}  follow a decision theoretic approach in which for a binary decision rule $\delta_{i}$, the null hypothesis $H_{0i}$
is rejected if $\delta_{i}=1$, and is not rejected if $\delta_{i}=0$. They use the following loss function
\[
L(\mu_{i},\delta_{i})=\begin{cases}
0 & \delta_{i}=\mu_{i}=1\,\text{or}\,\delta_{i}=\mu_{i}=0,\\
l_{I} & \delta_{i}=1,\,\mu_{i}=0,\\
l_{II} & \delta_{i}=0,\,\mu_{i}=1,
\end{cases}
\]
where $l_{I}$ and $l_{II}$ are loss values incurred due to making
type I and type II errors, respectively. The resulting Bayes estimator of the parameter $\mu_i$ is then given by
\begin{equation}
\delta_{i}=\begin{cases}
1 & \mathrm{if\:}\widehat{\psi}_{i}<\frac{l_{II}}{l_{I}+l_{II}},\\
0 & \mathrm{if\:}\widehat{\psi}_{i}\geq\frac{l_{II}}{l_{I}+l_{II}}.
\end{cases}\label{eq:Bayes rule 0-1 loss}
\end{equation}
Now, it can be verified using Theorem \ref{thm1} that the Bayes rule $\delta_i$ in equation (\ref{eq:Bayes rule 0-1 loss}) reduces to the following Bayes rule
\begin{equation}
\delta_{i}^u=\begin{cases}
1 & \mathrm{if\:}x_{i}> h_{u}(\widehat{\pi}_{0},\widehat{\lambda}),\\
0 & \mathrm{if\:}x_{i} \leq h_{u}(\widehat{\pi}_{0},\widehat{\lambda}),
\end{cases}\label{eq:NewBayes rule 0-1 loss-1}
\end{equation}
with $u=\frac{l_{II}}{l_{I}+l_{II}}$. This Bayes rule is simpler and more convenient than the Bayes rule $\delta_i$ in equation (\ref{eq:Bayes rule 0-1 loss}), due to the fact that it is based on the observed test statistic $x_i$ and estimates of $\pi_0$ and $\lambda$, which are available through equation (\ref{p0.hat}). In fact, equation (\ref{eq:NewBayes rule 0-1 loss-1}) illustrates that, unlike many existing algorithms in the literature, one may perform a multiple hypothesis testing comparison by just comparing their observed test statistics $x_i$ and the data-based function $h_{u}(\widehat{\pi}_{0},\widehat{\lambda})$.

\section{Simulation}\label{sec:5}
To illustrate the performance of the proposed LFDR estimation approach, we conduct simulations using two different strategies. \textcolor{reviewer1}{In addition to our proposed method, we investigate the performance of the HB method of \citet{Efron2004}, the ML method of  \citet{Marta2012}, and the FDR correction method of \citet{Benjamini1995}. We will briefly refer to the latter one by BH.}

\subsection{First simulation study}\label{subsec:5.1}
We follow the simulation strategy used in \citet{karimnezhad2020incorporating}. We take advantage of the fact that squared of log transformation of OR follows a chi-square distribution with one degree of freedom,  and that as reviewed in Section \ref{sec:3}, many algorithms in genetic association studies reduce to a chi-square model with one degree of freedom. For each iteration in our simulation study, we assume there are a total number of $N$ SNPs to be tested, of which $N_0$ SNPs are associated.  \textcolor{reviewer2}{The proportion of unassociated SNPs is then given by $\pi_0=1-\frac{N_0}{N}$}. We generate $z_i$ from $N(\log(OR),\sigma^2)$-distribution, where $\sigma^2$ is known, for $i=1,\ldots,N_0$, $OR\neq1$, and for $i=N_0+1,\ldots,N$, $OR=1$. Obviously, $x_i=(\frac{z_i}{\sigma})^2 \sim\chi_{1,\lambda}^2$, where for $i=1,\ldots,N_0$, $\lambda=\left(\frac{\log(OR)}{\sigma}\right)^2$, and for $i=N_0+1,\ldots,N$, $\lambda=0$.

\begin{algorithm}[h]
\caption{First simulation strategy.}\label{alg1}
 \begin{enumerate}
\item[] Step 1. Specify $N$, $N_0$, $OR$ and $\sigma^2$. By these values, $\pi_0=1-\frac{N_0}{N}$ and $\lambda=\left(\frac{\log(OR)}{\sigma}\right)^2$.
\item[] Step 2. Take $j=1$.
\item[] Step 3. Generate $z_{1},\ldots,z_{N_0}$ from $N(\log(OR),\sigma^{2})$-distribution.
\item[] Step 4. Generate $z_{N_0+1},\ldots,z_{N}$ from $N(0,\sigma^{2})$-distribution.
\item[] Step 5. Compute $x_{i}=(\frac{z_{i}}{\sigma})^{2}$, $i=1,\ldots,N$, the chi-square test statistics.
\item[] Step 6. Estimate $\psi_{i}$, $i=1,\ldots,N$, using the MM, ML and HB approaches. Denote the corresponding estimate by $\widehat{\psi}_{i}^M$, $M\in\{MM,ML,HB\}$. Also, denote the corresponding estimates of $\pi_0$ and $\lambda$ by $\widehat{\pi}_0^M$ and  $\widehat{\lambda}^M$. 
\item[] Step 7. Compute errors in estimating $\pi_0$, $\lambda$ and $\psi_i$ by $$E_{\pi_0}^{M,j}=(\widehat{\pi}_0^M-\pi_0)^2,\,E_{\lambda}^{M,j}=(\widehat{\lambda}^M-\lambda)^2,\,E_{\psi}^{M,j}=\frac{1}{N}\sum_{i=1}^{N}(\widehat{\psi}_i^M-\psi_i)^2.$$
\item[]  Step 8. Increase $j$ by one and repeat Steps 3 to 7  \textcolor{reviewer2}{until $j=b$ times. Then, in correspondence with each method $M\in\{MM,ML,HB\}$, compute
\begin{eqnarray*}
MSE_{\pi_0}^M=\frac{1}{b}\sum_{j=1}^{b}E_{\pi_0}^{M,j}, \quad  MSE_{\lambda}^M=\frac{1}{b}\sum_{j=1}^{b}E_{\lambda}^{M,j},\quad MSE_{\psi}^M=\frac{1}{b}\sum_{j=1}^{b}E_{\psi}^{M,j}.
\end{eqnarray*}} 
\end{enumerate}
\end{algorithm}

We took the steps in Algorithm \ref{alg1} with $N=1,000,000$, \textcolor{Reviewer2}{$\pi_0=0, 0.05$, $0.10(0.1)$ $0.90, 0.95, 1$} and $b=100$. We also took $OR=1.5$ and $\sigma^2=0.01$. By these choices, the true non-centrality parameter is 16.44. \textcolor{Reviewer1}{Figure \ref{fig:mean_sd_pi0}(a)-(b) represents plots of mean and standard deviation (sd) of $\widehat{\pi}_0$ at different selected true $\pi_0$ values. Also, Figure \ref{fig:mean_sd_lambda}(a)-(b) displays plots of mean and standard deviation of $\widehat{\lambda}$ at the same $\pi_0$ values.} Note that the HB approach outputs estimates of ${\psi}_{i}$ and $\pi_0$, while the MM and ML approaches output estimates of ${\psi}_{i}$, $\pi_0$ and $\lambda$.  \textcolor{Reviewer1}{From Figure  \ref{fig:mean_sd_pi0}(a), we observe that mean of $\hat{\pi}_0$ values computed by the HB approach is almost identical to the true $\pi_0$ values only when $\pi_0$ is close to 1. This is not surprising, as according to \citet{Efron2004}, the HB approach has been designed to perform well when $\pi_0$ is higher than 0.9. Figure \ref{fig:mean_sd_pi0}(b) also reveals that the corresponding variance values are higher than the ones computed by the MM and ML approaches (except at $\pi_0=1$). The performance of the HB approach was weak when $\pi_0$ is less than 0.7 (it failed when $\pi_0<0.5$). As reflected in Figure \ref{fig:mean_sd_pi0}(a)-(b), both the ML and MM approaches estimated $\pi_0$ very well (except at $\pi_0=1$). From Figure \ref{fig:mean_sd_lambda}(a), we observe that both the ML and MM estimated the non-central parameter $\lambda$ very well when $\pi_0\leq 0.9$, and their performance weakened as $\pi_0$ increases. We also observe from Figure \ref{fig:mean_sd_lambda}(b) that the ML approach led to lower variance values than the MM approach, especially when $\pi_0=1$.}
 
\begin{figure}
\centering
\subfloat[]{
  \includegraphics[width=60mm]{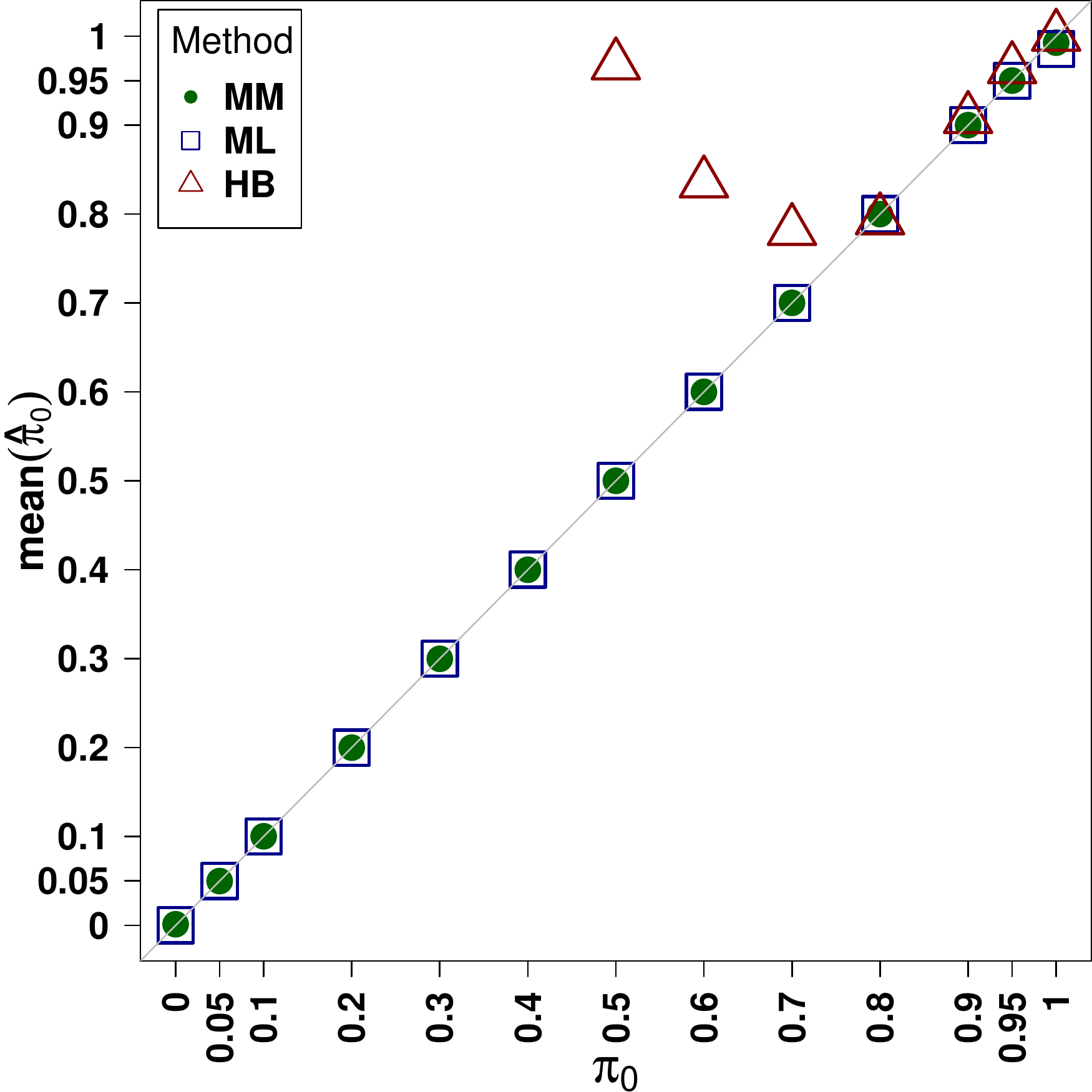}
}
\subfloat[]{
  \includegraphics[width=60mm]{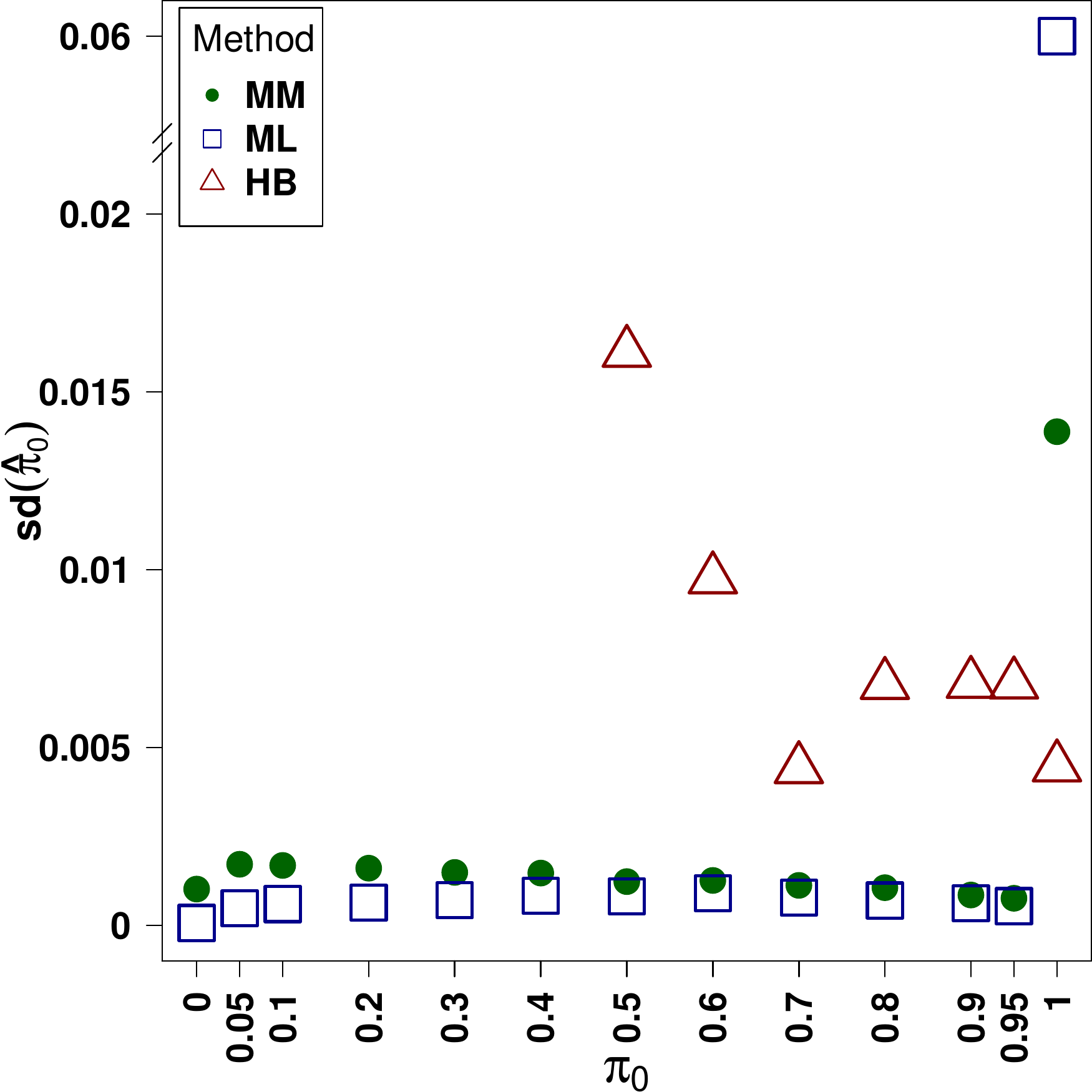}
}
\caption{Plots of (a) mean($\widehat{\pi}_0$), (b) sd($\widehat{\pi}_0$), computed based on the MM, ML and HB approaches for different values of $\pi_{0}$ in Algorithm \ref{alg1}. The gray line in panel (a) represents the identity line.}\label{fig:mean_sd_pi0}
\end{figure}
 \begin{figure}
\centering
\subfloat[]{
  \includegraphics[width=60mm]{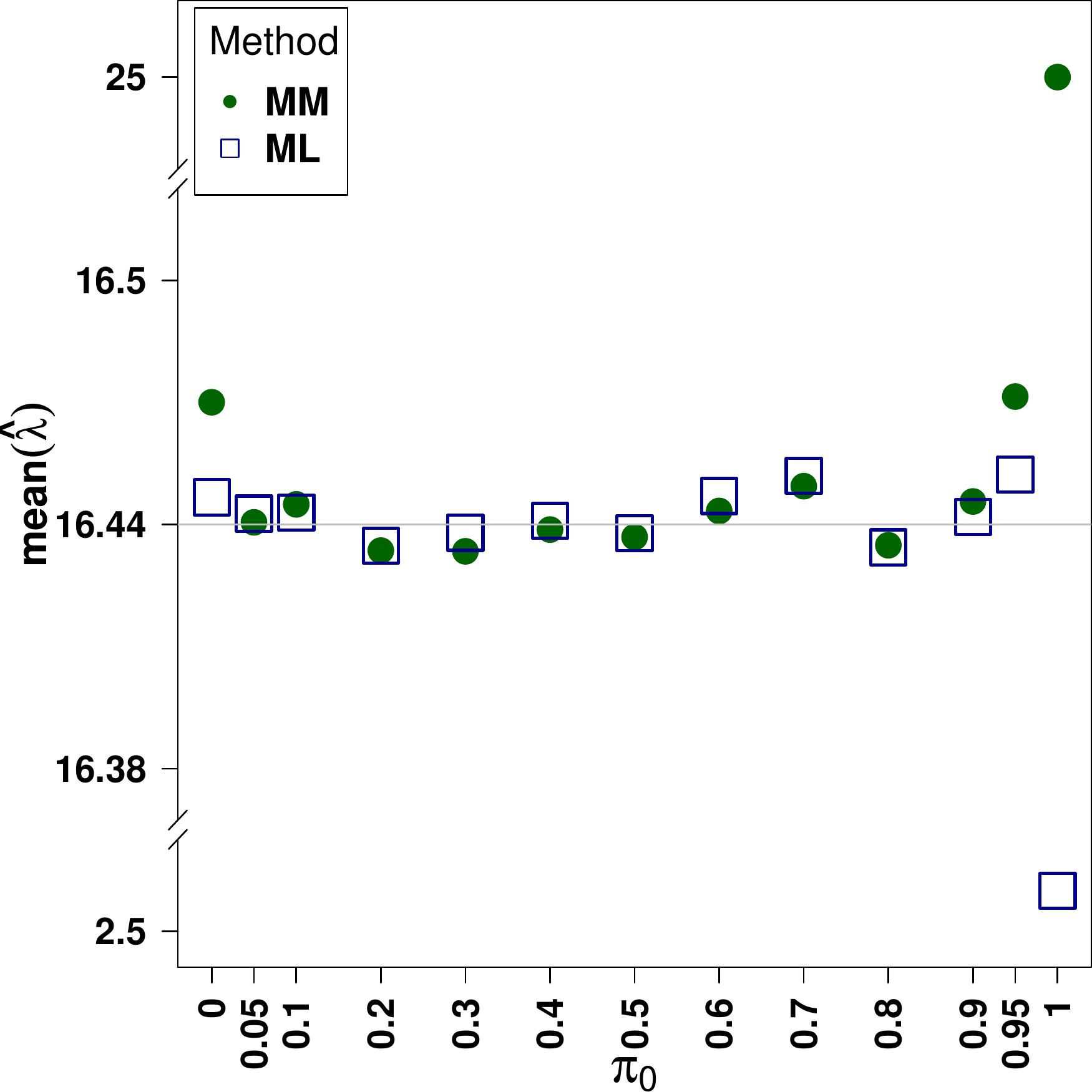}
}
\subfloat[]{
  \includegraphics[width=60mm]{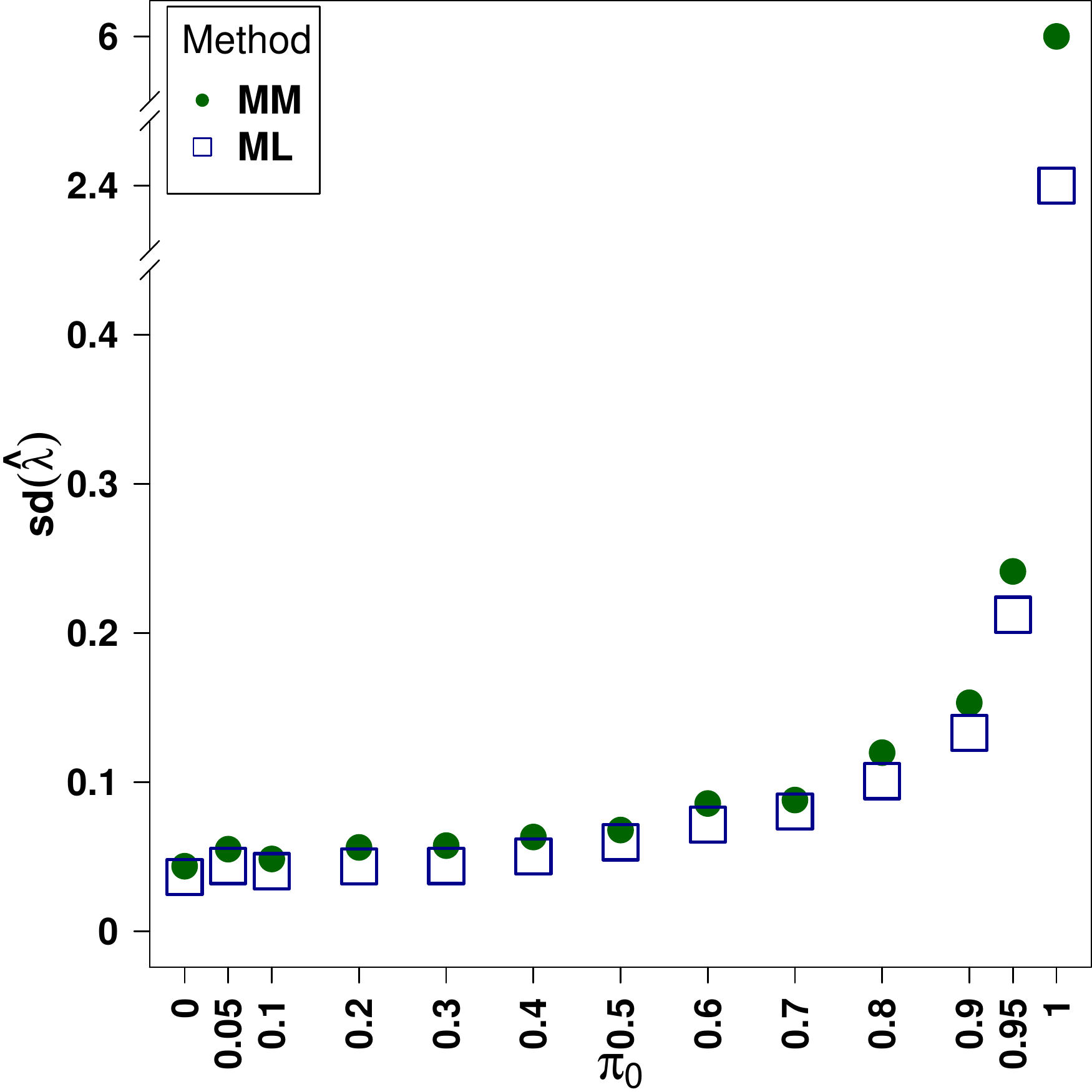}
}
\caption{Plots of (a) mean($\widehat{\lambda}$), (b) sd($\widehat{\lambda}$), computed based on the MM and ML approaches for different values of $\pi_{0}$ in Algorithm \ref{alg1}. The gray line in panel (a) represents the true $\lambda$.}\label{fig:mean_sd_lambda}
\end{figure}

 \begin{figure}
\centering
\subfloat[]{
  \includegraphics[width=60mm]{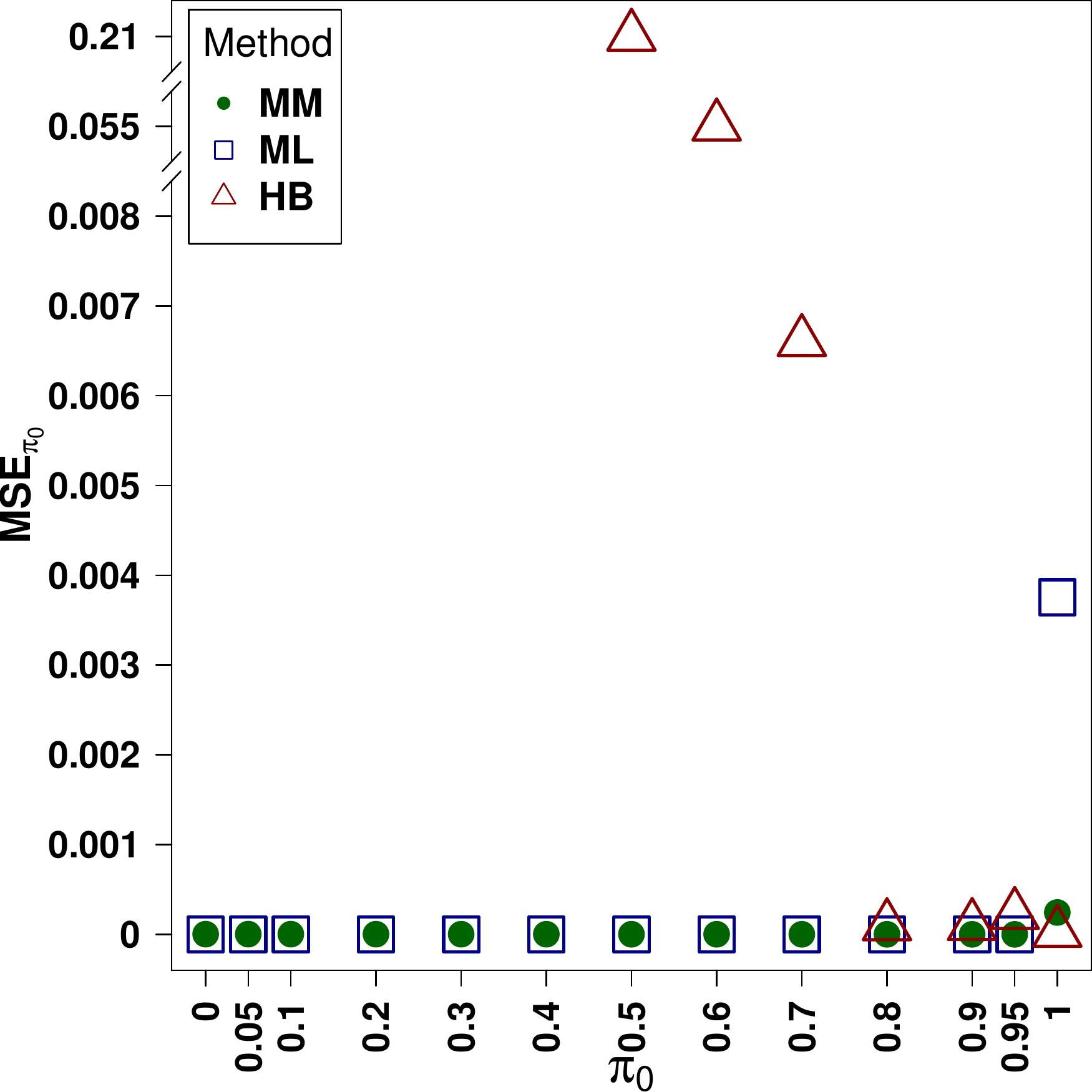}
}
\subfloat[]{
  \includegraphics[width=60mm]{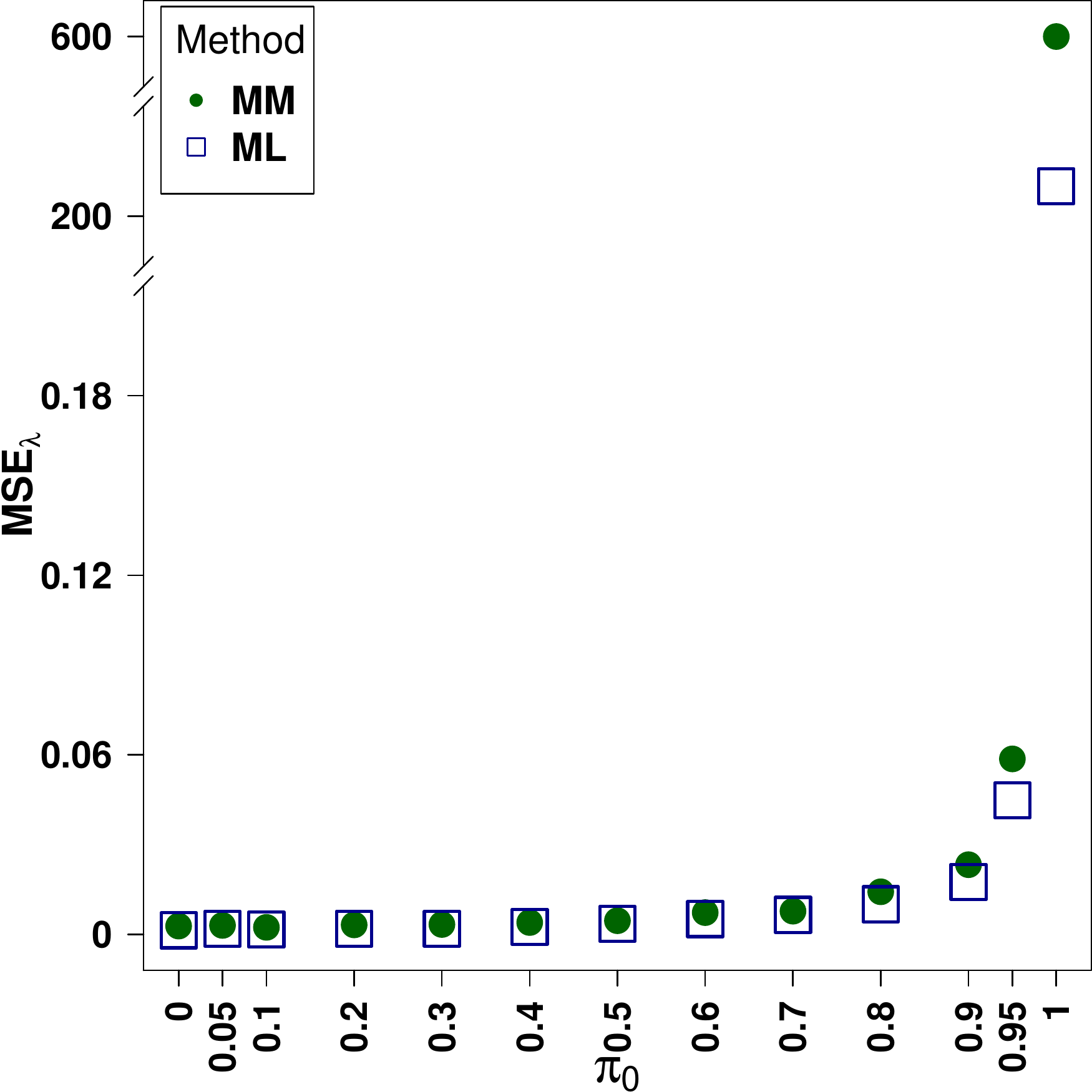}
}
\hspace{0mm}
\subfloat[]{
  \includegraphics[width=60mm]{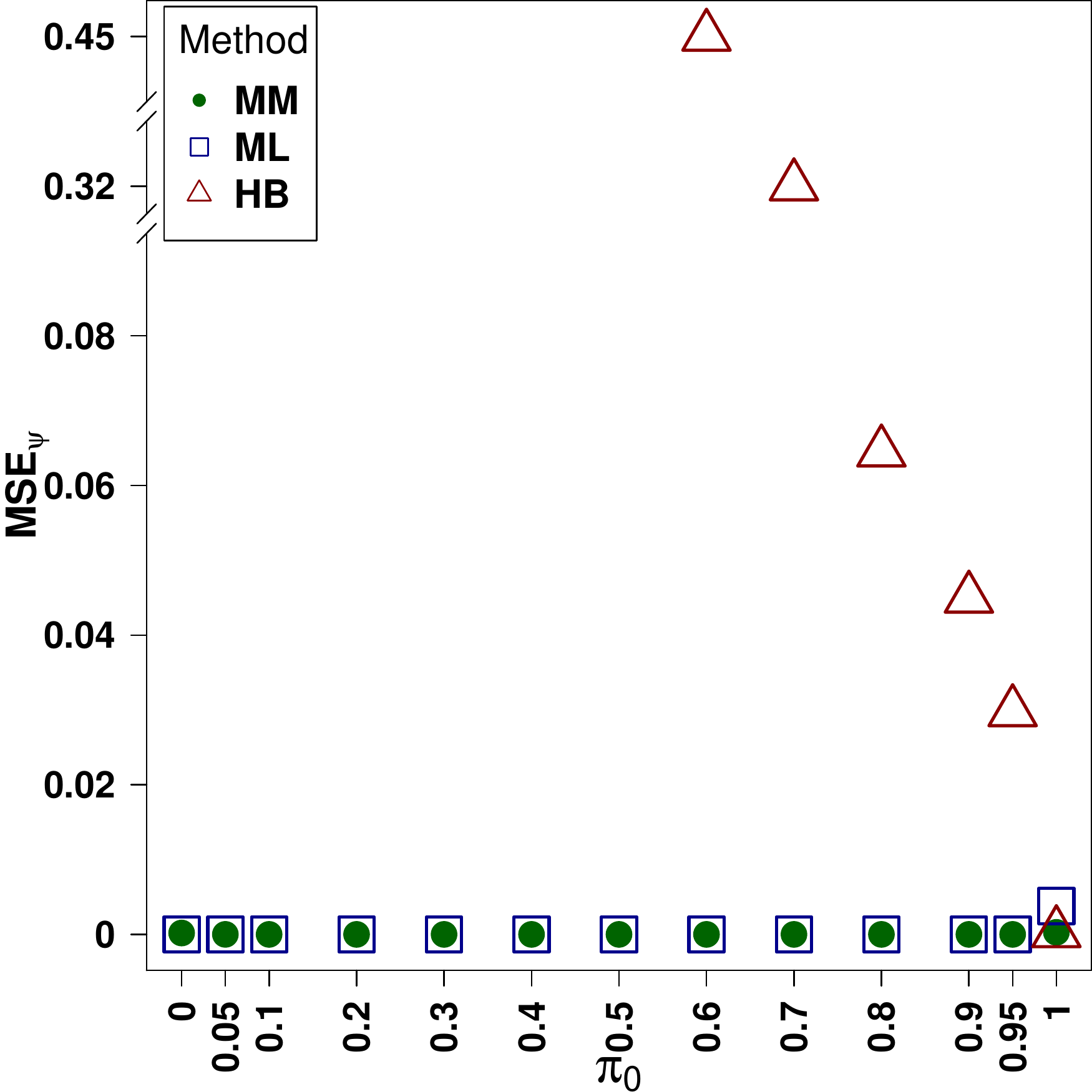}
}
\subfloat[]{
  \includegraphics[width=60mm]{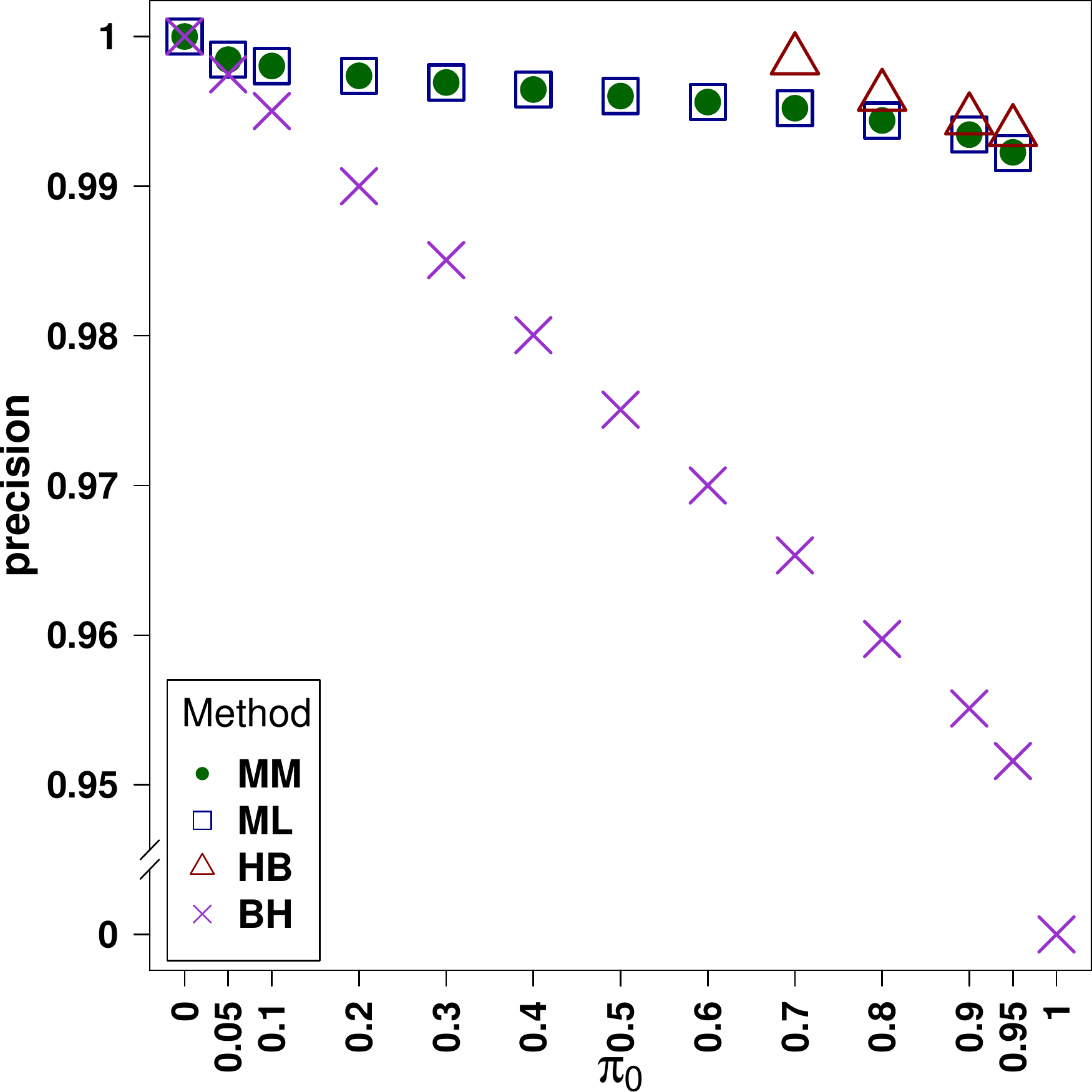}
}
\caption{Plots of (a) $MSE_{\pi_0}$, (b) $MSE_{\lambda}$, (c) $MSE_{\psi}$, (d) precision, for different values of $\pi_{0}$ in Algorithm \ref{alg1}.}\label{fig:MSE_pi0-and-MSE_lambda-and-MSE_psi}
\end{figure}

Figure \ref{fig:MSE_pi0-and-MSE_lambda-and-MSE_psi}(a)-(c) reflects $MSE_{\pi_0}$, $MSE_\lambda$ as well as $MSE_\psi$ at different selected true $\pi_0$ values. From Figure \ref{fig:MSE_pi0-and-MSE_lambda-and-MSE_psi}(a), we observe that the HB approach led to the highest MSE  when $\pi_0< 0.8$. It performed well for other values of $\pi_0$. The ML and MM approaches performed very well at all $\pi_0$ levels except when $\pi_0=1$. At  $\pi_0=1$, the HB approach performed better than the other two approaches and the MM approach outperformed the ML one. From Figure \ref{fig:MSE_pi0-and-MSE_lambda-and-MSE_psi}(b), it is observed that both the MM and ML approaches performed very well when $\pi\leq0.9$. Their performance is still satisfactory for other values of $\pi_0$. However, the ML approach led to lower MSE values than the MM approach. Figure \ref{fig:MSE_pi0-and-MSE_lambda-and-MSE_psi}(c) reflects values of MSEs in estimating LFDRs. From this figure it is observed that the HB approach led to the highest MSE values at all levels of $\pi_0$ except when $\pi_0=1$. Both the MM and ML approaches performed very well at all levels of $\pi_0$. At $\pi_0=1$, the MM and HB approaches outperformed the ML approach. It is interesting to add that comparing Figures  \ref{fig:mean_sd_pi0},  \ref{fig:mean_sd_lambda} and \ref{fig:MSE_pi0-and-MSE_lambda-and-MSE_psi}(a)-(c) reveals that the performance of both MM and ML approaches depends on accuracy of estimation in $\pi_0$ rather than $\lambda$. For example at $\pi_0=1$, from Figure \ref{fig:MSE_pi0-and-MSE_lambda-and-MSE_psi}(a) and (c), we observe that the MM approach led to a lower MSE value than the ML approach while from Figure \ref{fig:MSE_pi0-and-MSE_lambda-and-MSE_psi}(b), it led to a higher MSE value than the ML approach.

We also compared the performance of the proposed MM method with the BH method, as one of classic approaches existing in the literature. \textcolor{Reviewer1}{For a given FDR level $\alpha$ and a set of $N$ ascending p-values $p_{(1)},\ldots,p_{(N)}$, the BH method finds the largest index, say $k$, for which $p_{(k)}\leq \frac{i}{N} \alpha$}, and then $H_{01},\ldots,H_{0k}$ are rejected. As a measure of performance, we calculated the rate of true discoveries over the sum of true discoveries and false discoveries. We call this rate ``precision". \textcolor{Reviewer1}{Note that precision in known to be equal to $1-FDR$}. A true discovery here means rejecting the null hypothesis while it is not true, and similarly, a false discovery means rejecting the null hypothesis while it is true. \textcolor{Reviewer1}{To recognize whether a true discovery occurs, we compare the estimated LFDR value with a threshold, say 0.05.} Thus, with the settings of the simulations in Algorithm \ref{alg1}, if for $i=1,\ldots,N_0$, $\widehat{\psi}_i\leq 0.05$, we deduce that a true discovery occurs. Also, if for $i=N_0+1,\ldots,N$, $\widehat{\psi}_i\leq 0.05$, we infer that a false discovery occurs. Thus, we measure precision by  $\frac{\sum_{j=1}^{N_0}I(\widehat{\psi}_i^M\leq 0.05)}{\sum_{j=1}^{N}I(\widehat{\psi}_i^M\leq 0.05)}$, where $M\in\{MM,ML,HB,BH\}$. Figure \ref{fig:MSE_pi0-and-MSE_lambda-and-MSE_psi}(d) represents precision calculated at difference levels of true $\pi_0$. From the figure we observe that when $\pi_0\in[0.7,1)$, the HB approach led to the highest precision values compared to the MM, ML and BH approaches (it reported 0 over 0 at $\pi_0=0.5,0.6,1$, and failed to return output when $\pi_0<0.5$). Both the MM and ML approaches led to satisfactory precision values at all levels of $\pi_0$. The MM, ML and HB methods all returned 0 over 0 at $\pi_0=1$, as expected (at $\pi_0=1$, an ideal LFDR estimation method is expected to lead to no discoveries) but the BH approach led to a 0 precision, which means it identified some false discoveries. \textcolor{Reviewer1}{The figure also represents that the BH approach successfully controlled FDR to the 95\% (or $1-\alpha$) level, as expected.}

\begin{figure}[b]
\centering
\subfloat[]{
  \includegraphics[width=60mm]{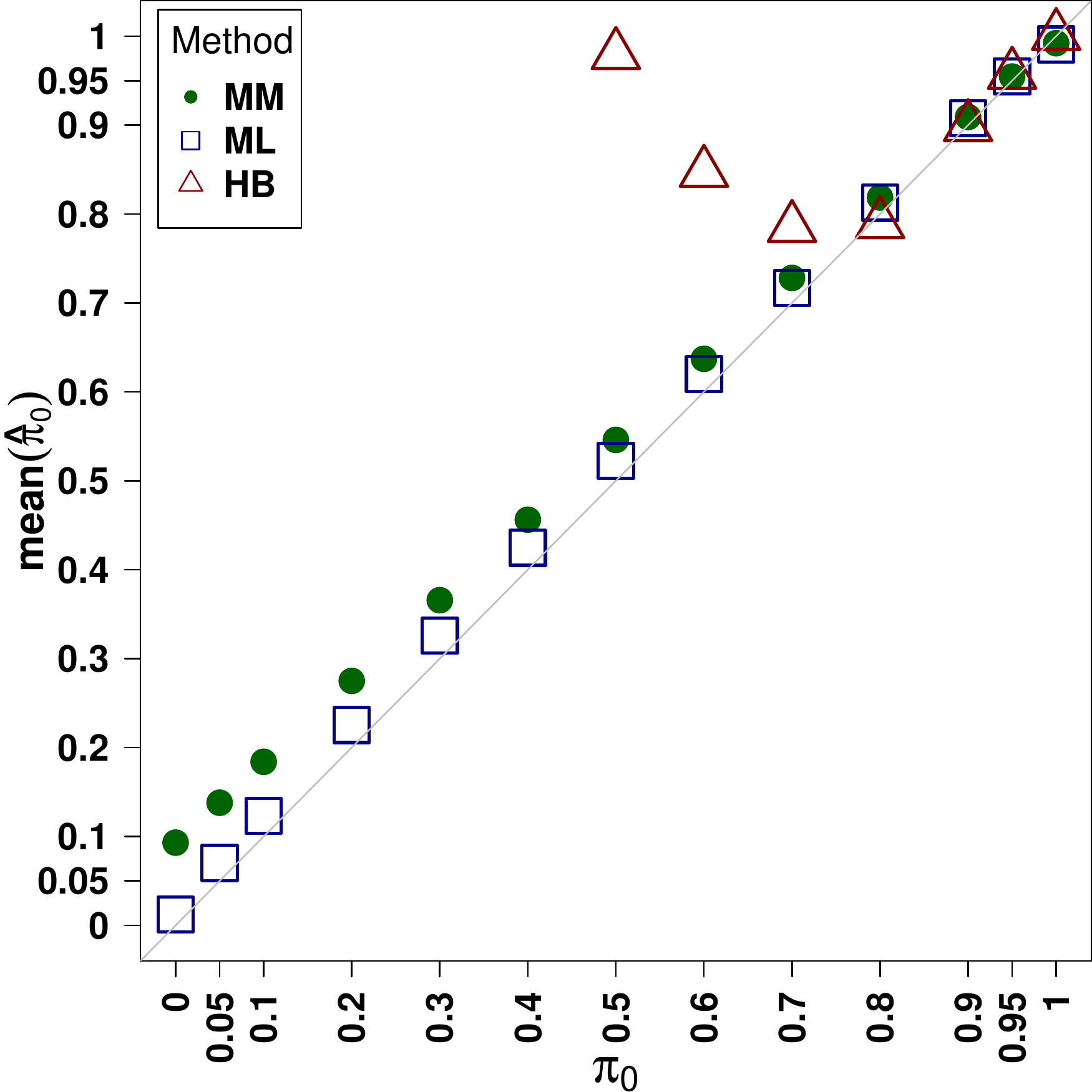}
}
\subfloat[]{
  \includegraphics[width=60mm]{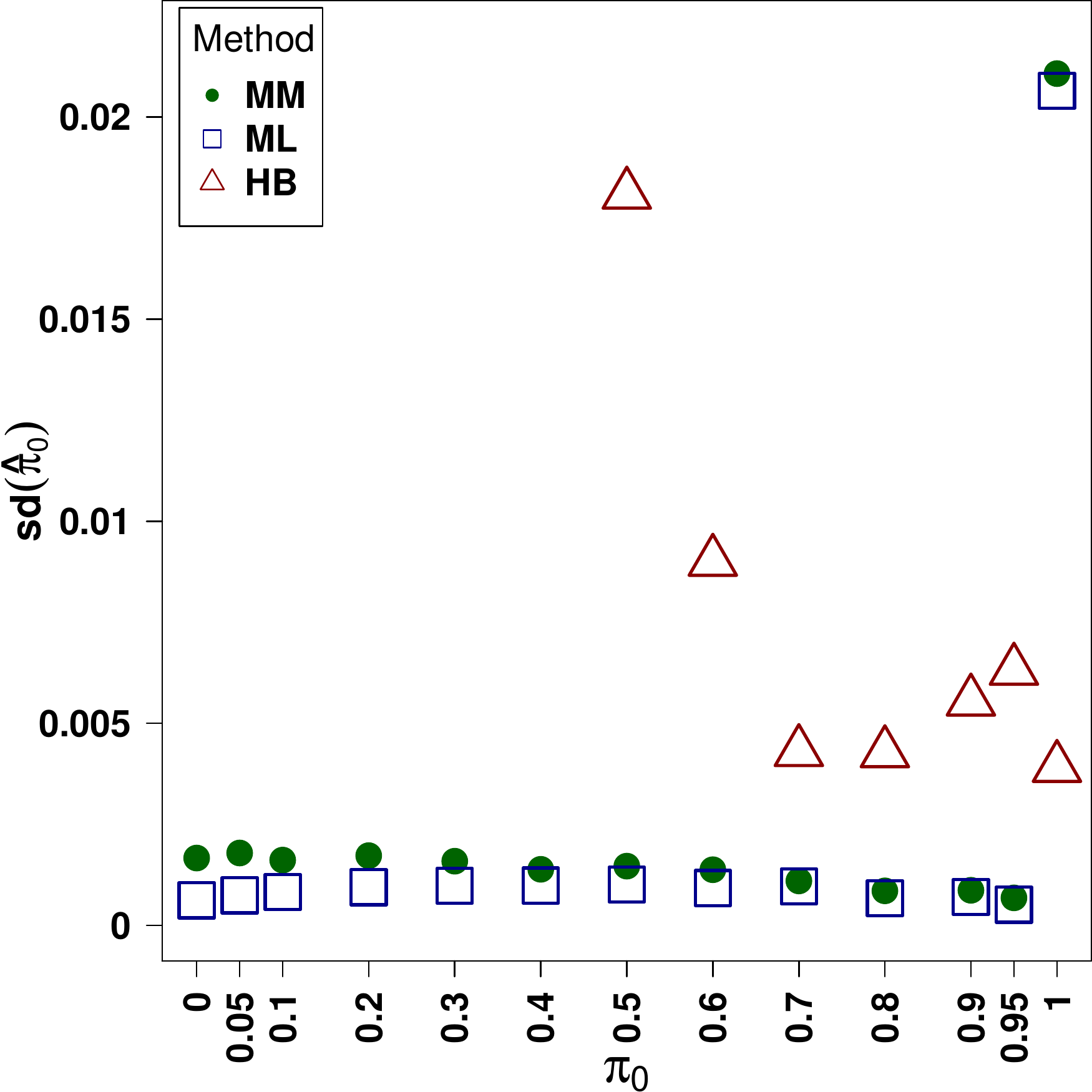}
}
\caption{Plots of (a) mean($\widehat{\pi}_0$), (b) sd($\widehat{\pi}_0$), computed based on the MM, ML and HB approaches for different values of $\pi_{0}$ in Algorithm \ref{alg1} with random ORs.The gray line in panel (a) represents the identity line.}\label{fig:mean_sd_pi0_OR}
\end{figure}

\begin{figure}
\centering
\subfloat[]{
  \includegraphics[width=60mm]{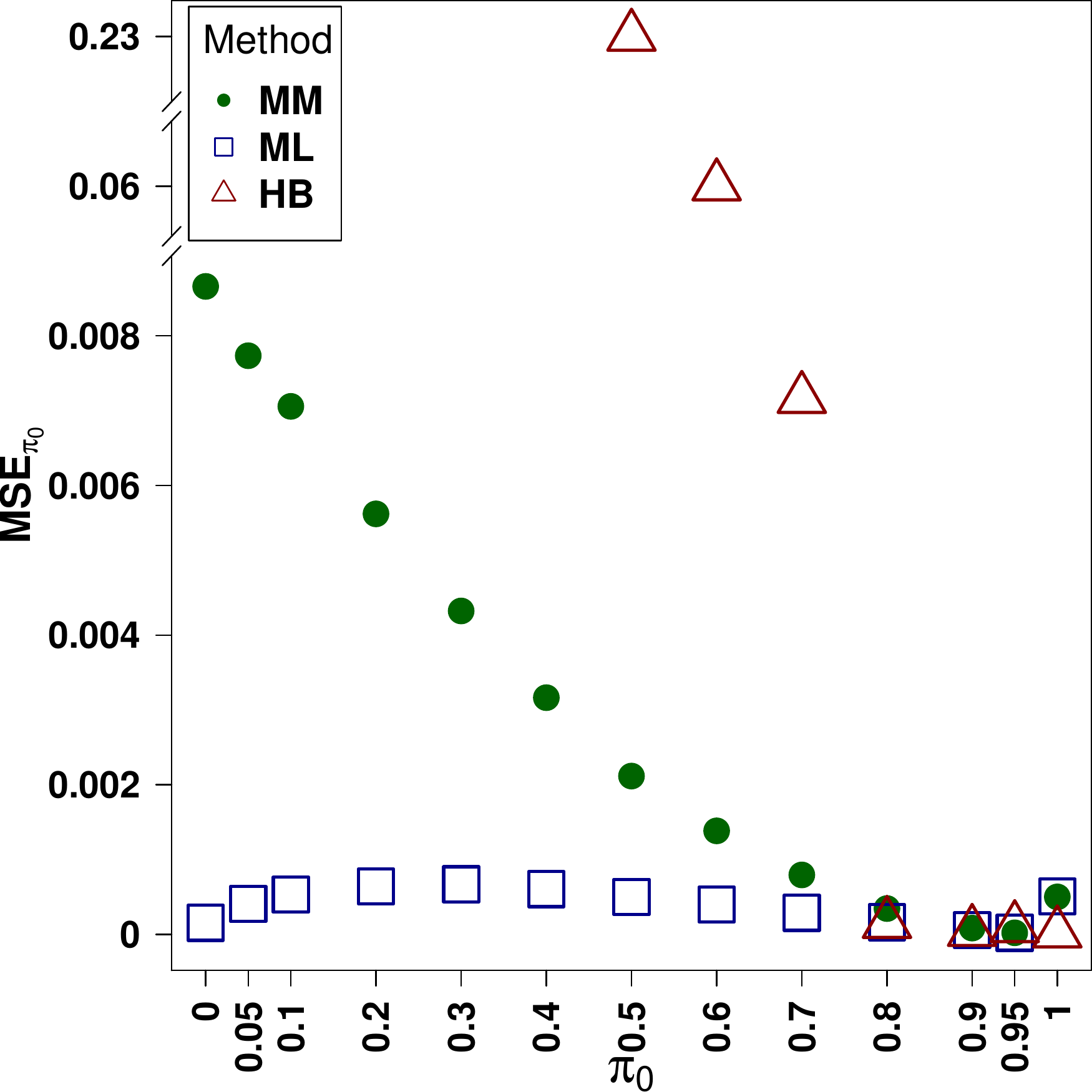}
}
\subfloat[]{
  \includegraphics[width=60mm]{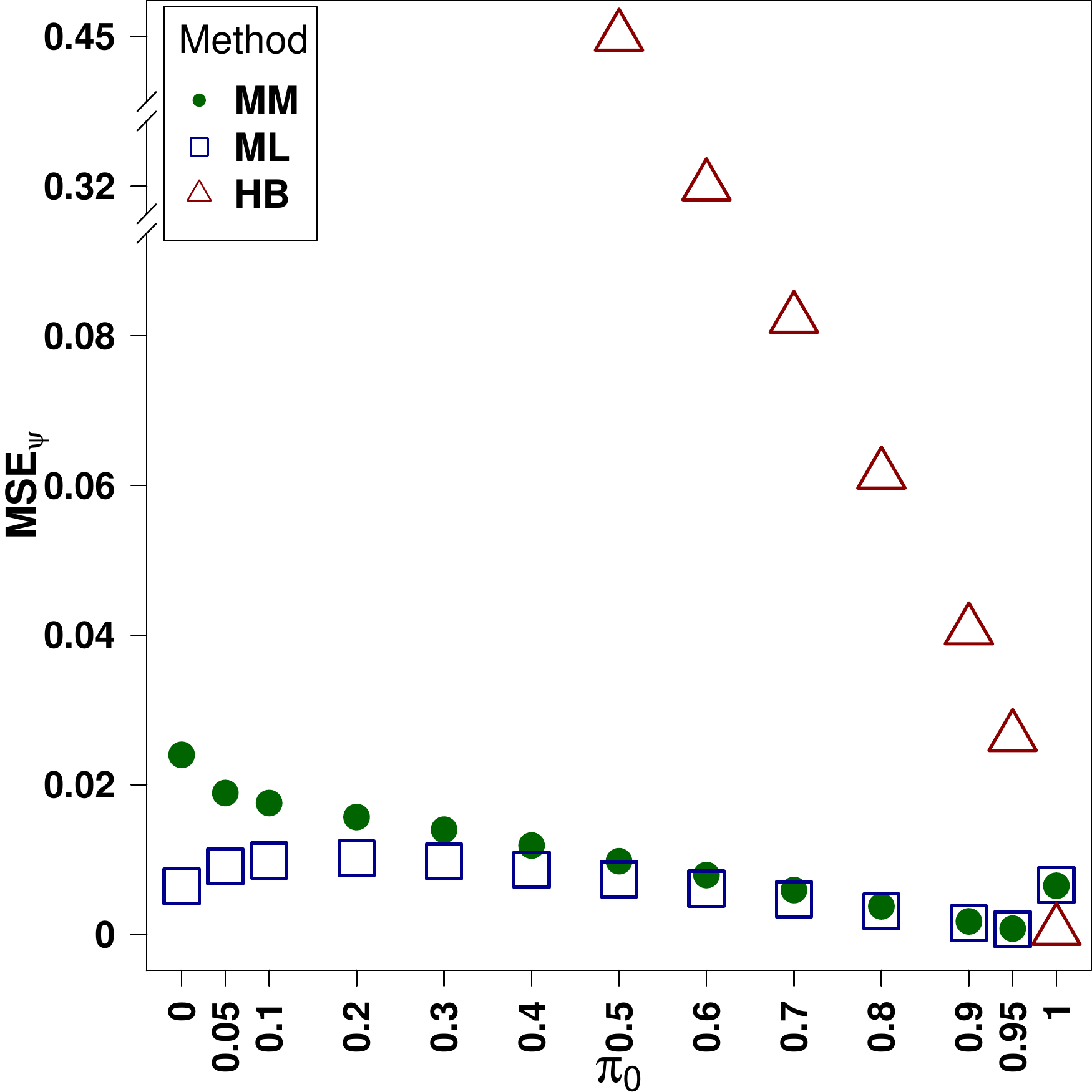}
}
\hspace{0mm}
\subfloat[]{
  \includegraphics[width=60mm]{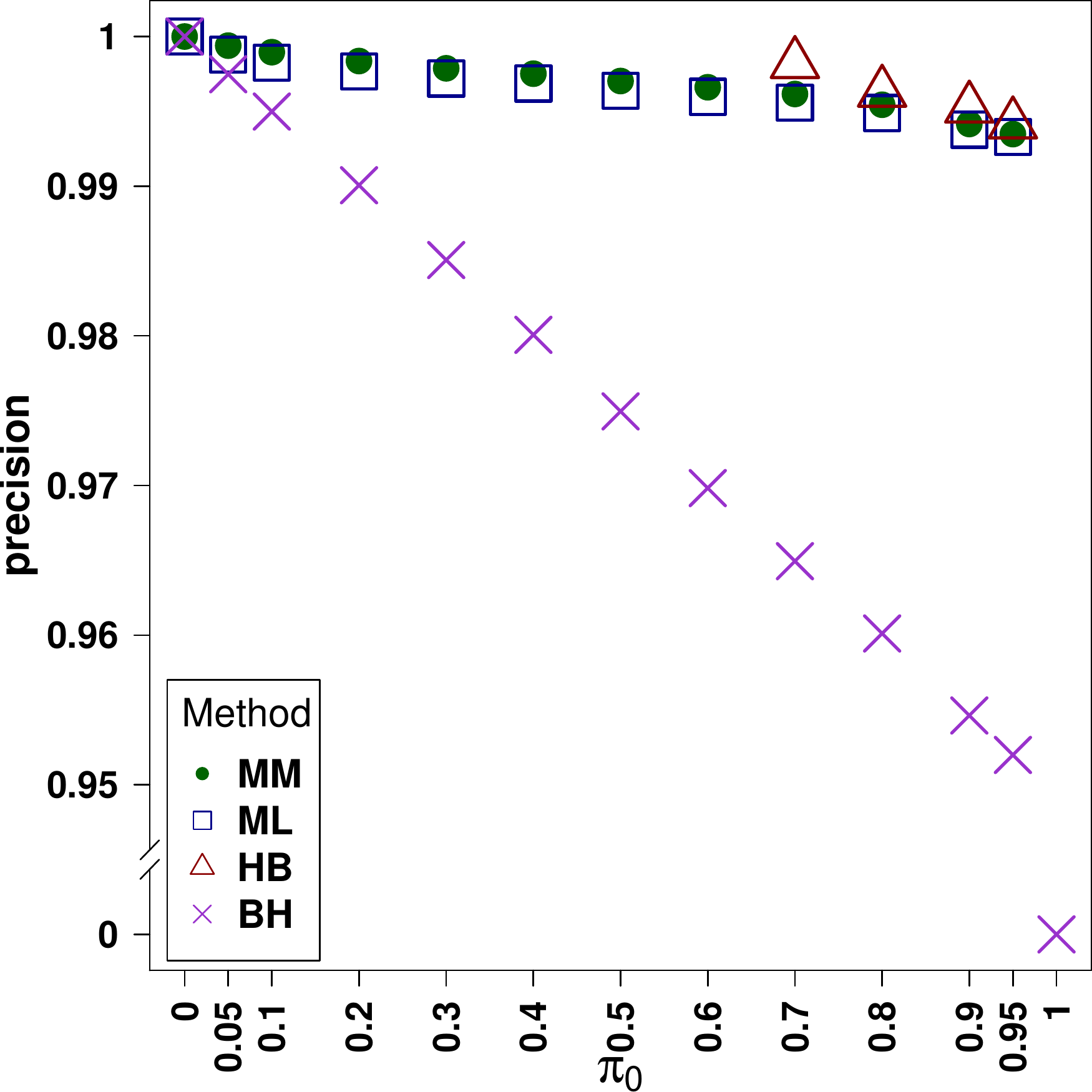}
}
\caption{Plots of (a) $MSE_{\pi_0}$, (b) $MSE_{\psi}$, (c) precision, for different values of $\pi_{0}$ in Algorithm \ref{alg1} will random ORs.}\label{fig:MSE_pi0-and-MSE_psi_random}
\end{figure}

We also evaluated robustness of the proposed method w.r.t. the violation of the same non-centrality parameter assumption. In this regard, instead of having the same OR for associated SNPs in Algorithm \ref{alg1}, we assumed that OR follows  a $N(1.5,0.1)$-distribution, and thus generated different OR values for each associated SNP. We then followed the steps in Algorithm  \ref{alg1}. Figures \ref{fig:mean_sd_pi0_OR} and \ref{fig:MSE_pi0-and-MSE_psi_random} represent simulation results. Note that we were unable to produce any plot for either $\widehat{\lambda}$ or $MSE_{\lambda}$, as the true $\lambda$  for each associated SNP was a random number.  \textcolor{Reviewer1}{Comparing Figure \ref{fig:mean_sd_pi0_OR} with Figure \ref{fig:mean_sd_pi0}, it is observed that for $\pi_0\leq0.8$, the performance of both the MM and ML approaches got worse and, the ML approach performed a bit better than the MM one. However, their performance did not change when $\pi_0>0.8$. The performance of the HB method had no significant changes.} Comparing Figure \ref{fig:MSE_pi0-and-MSE_psi_random}(a) with Figure \ref{fig:MSE_pi0-and-MSE_lambda-and-MSE_psi}(a), it is clear that the performance of both the MM and ML approaches got worse when $\pi_0\leq0.8$, although the ML approach performed better than the MM one, and when $\pi_0>0.8$, both the MM and ML approaches performed the same.  Again, the performance of the HB method did not change. The same conclusion is made when comparing Figure \ref{fig:MSE_pi0-and-MSE_psi_random}(b) with Figure \ref{fig:MSE_pi0-and-MSE_lambda-and-MSE_psi}(c). Comparing Figure \ref{fig:MSE_pi0-and-MSE_psi_random}(c) with Figure \ref{fig:MSE_pi0-and-MSE_lambda-and-MSE_psi}(d), we observe almost the same pattern with very small changes in precision values of the MM and ML methods. From the above performance analysis, we conclude that when $\pi_0\leq0.8$, replacing common non-centrality parameters by uncommon ones impacts the performance of both the ML and MM methods. However, the performance difference is negligible when $\pi_0>0.8$. We should emphasize that $\pi_0$ in practice is greater than 0.9, and thus, the assumption of having the same non-centrality parameter should not significantly impact  analysis results.   

\textcolor{reviewer1}{Summarizing the above performance evaluation, we conclude that, compared to the ML, HB and BH methods, the MM method performs well and the resulting LFDR estimates are highly precise and reliable.}

\subsection{Second simulation study}\label{subsec:5.2}
We simulate case-control samples for each SNP given an additive model. The simulation strategy follows the steps in Algorithm \ref{alg2}. Although this simulation strategy includes more parameters than the ones used in Algorithm \ref{alg1}, it does not allow to control the true parameter $\lambda$. The only true parameter which is known from the beginning of the simulation is $\pi_0$. Thus, in this simulation strategy, we are only able to measure the accuracy in estimating $\pi_0$. However, since $\widehat \lambda$ in equation (\ref{p0.hat}) directly depends on the value of $\widehat\pi_0$, a perfect estimate of  $\pi_0$ would automatically lead to a reliable estimate of $\lambda$.

Following Algorithm \ref{alg2}, we conducted different simulations with different parameters. \textcolor{reviewer1}{We took $r=60,000$, $s=120,000$, $p=0.2$, $v_0=0.01$, $OR_2=1.5$, $N=1,000,000$,  \textcolor{Reviewer2}{$\pi_0=0, 0.05$, $0.10(0.1)0.90, 0.95, 1$} and $b=100$}. \textcolor{Reviewer1}{Figure \ref{fig:sim2_mean_sd_pi0}(a)-(b) represents plots of mean and sd of $\widehat{\pi}_0$ at different selected true $\pi_0$ values. Also, Figure \ref{fig:sim2_MSE_precision}(a)-(b) displays $MSE_{\pi_0}$ and precision at the same $\pi_0$ levels. From Figure  \ref{fig:sim2_mean_sd_pi0}(a)-(b), we observe that, the HB method failed when $\pi_0<0.5$. It estimated $\pi_0$ very well only when it is very close to 1. According to Figure  \ref{fig:sim2_mean_sd_pi0}(a), the ML and MM approaches performed well only when $\pi_0\in[0.5,0.9]$. Their performance was weak for other $\pi_0$ values. When $\pi_0<0.5$, both the MM and ML methods over-estimated $\pi_0$, and when $\pi_0>0.5$ both methods under-estimated it. We also observe from Figure  \ref{fig:sim2_mean_sd_pi0}(b) that when $\pi_0\leq 0.5$, the ML approach led to lower variance values than the MM approach. However, the MM estimator led to a variance improvement for the rest of $\pi_0$ values.} 
Almost the same behavior is observed in Figure \ref{fig:sim2_MSE_precision}(a). When $\pi_0\geq0.9$, the HB approach performed very well, and the MM approach outperformed the ML one. From Figure \ref{fig:sim2_MSE_precision}(b), we observe that the HB approach led to the highest precision values compared to the MM, ML and BH approaches (again, it reported 0 over 0 at $\pi_0=0.5,0.6,0.7,1$, and failed to return output when $\pi_0<0.5$). The MM and ML approaches performed similarly. The BH approach led to higher precision than the MM and ML approaches when $\pi_0<0.5$. However, the MM and ML approaches outperformed the BH one when $\pi_0>0.5$. All methods except the BH method returned 0 over 0 at $\pi_0=1$ while the BH one led to a 0 precision. \textcolor{Reviewer1}{Figure \ref{fig:sim2_MSE_precision}(b) also reveals that the BH approach successfully controlled FDR to the 95\% level again.} Overall, it is concluded that the proposed method leads to satisfactory results.

\begin{algorithm}
\caption{Second simulation strategy.}\label{alg2}
{\begin{enumerate}
\item[] Step1. Specify the numbers of cases ($r$) and controls ($s$), the allele frequency $p$ for the risk \\ \hspace*{11mm}allele $B$ and the reference penetrance $v_0$.
\item[]  Step 2. Take $l=1$.
\item[]  Step 3. Specify $OR_2\neq 1$.
\item[]  Step 4. Calculate $v_2$ using the following equation
\[v_2= \frac{exp(\beta_0+\beta_2)}{1+exp(\beta_0+\beta_2)},\]
\hspace*{11mm}where $\beta_0=\log\left(\frac{v_0}{1-v_0}\right)$ and $\beta_2=\log(OR_2)$ (the above equation is in fact the\\ \hspace*{11mm}prospective logistic regression model).  
\item[] Step 5. Calculate $v_1=\frac 1 2 (v_0+v_2)$ (this is due to selecting ad additive model).
\item[] Step 6. Calculate $k=\sum_{j=0}^2 v_j g_j$, where $g_0=(1-p)^2$, $g_1=2p(1-p)$ and $g_2=p^2$.
\item[]  Step 7. For $j=1,2,3$, calculate $p_j =g_jv_j/k$ and $q_j =g_j(1-v_j)/(1-k)$.
\item[]  Step 8. Take $m=1$.
\item[]  Step 9. Generate random samples $(r_0,r_1,s_2)$ and $(s_0,s_1,s_2)$ independently from the
multi-\\ \hspace*{11mm}nomial distributions $Mul(r;p_0, p_1, p_2)$ and $Mul(s; q_0, q_1, q_2)$, respectively. This \\ \hspace*{11mm}leads to a $2\times3$ Table similar to Table \ref{Tab1}.
\item[]  Step 10. Similar to Table \ref{Tab2}, construct the corresponding $2\times2$ Table and compute the chi-\\ \hspace*{11mm}square test statistic of independence, i.e., $x_l=\sum_{j=1}^{4}(o_j-e_j)^2/e_j$, where\\ \hspace*{11mm} $o_1=2r_0+r_1$, $o_2=r_1+2r_2$, $o_3=2s_0+s_1$, $o_4=s_1+2s_2$, \\ \hspace*{12mm}$e_1=2R(2n_0+n_1)/(2(R+S))$, $e_2=2R(n_1+2n_2)/(2(R+S))$,\\ \hspace*{11mm} $e_3=2S(2n_0+n_1)/(2(R+S))$, $e_4=(2S)(n_1+2n_2)/(2(R+S))$\\ \hspace*{11mm} with $R=r_0+r_1+r_2$, $S=s_0+s_1+s_2$ and $n_j=r_j+s_j$, $j=1,2,3$. 
\item[] Step 11. Step up $m$ by one and repeat Steps 9-10 until $m={N_0}$.
\item[] Step 12. Take $OR_2=1$ and repeat Steps 4-7. 
\item[] Step 13. Increase $m$ by one, and repeat Steps 9-10 until $m=N$.
\item[] Step 14. Estimate $\pi_0$  by $\widehat{\pi}_0^M$, where $M$ indicates one of the MM, ML and HB approaches.
\item[] Step 15. Compute the error of estimating the true proportion of unassociated SNPs by\\ \hspace*{11mm} $E_{\pi_0}^{l,M}=(\widehat{\pi}_0^M-\pi_0)^2$, where $\pi_0=1-\frac{N_0}{N}$.
\item[] Step 16. Increase $l$ by one and repeat Steps 3-15 for  \textcolor{reviewer2}{for $b$ times. Then, compute
\begin{eqnarray*}
MSE_{\pi_0}^M=\frac{1}{b}\sum_{l=1}^{b}E_{\pi_0}^{l,M}.
\end{eqnarray*}}
\end{enumerate}
}
\end{algorithm}

  \begin{figure}[h]
\centering
\subfloat[]{
  \includegraphics[width=60mm]{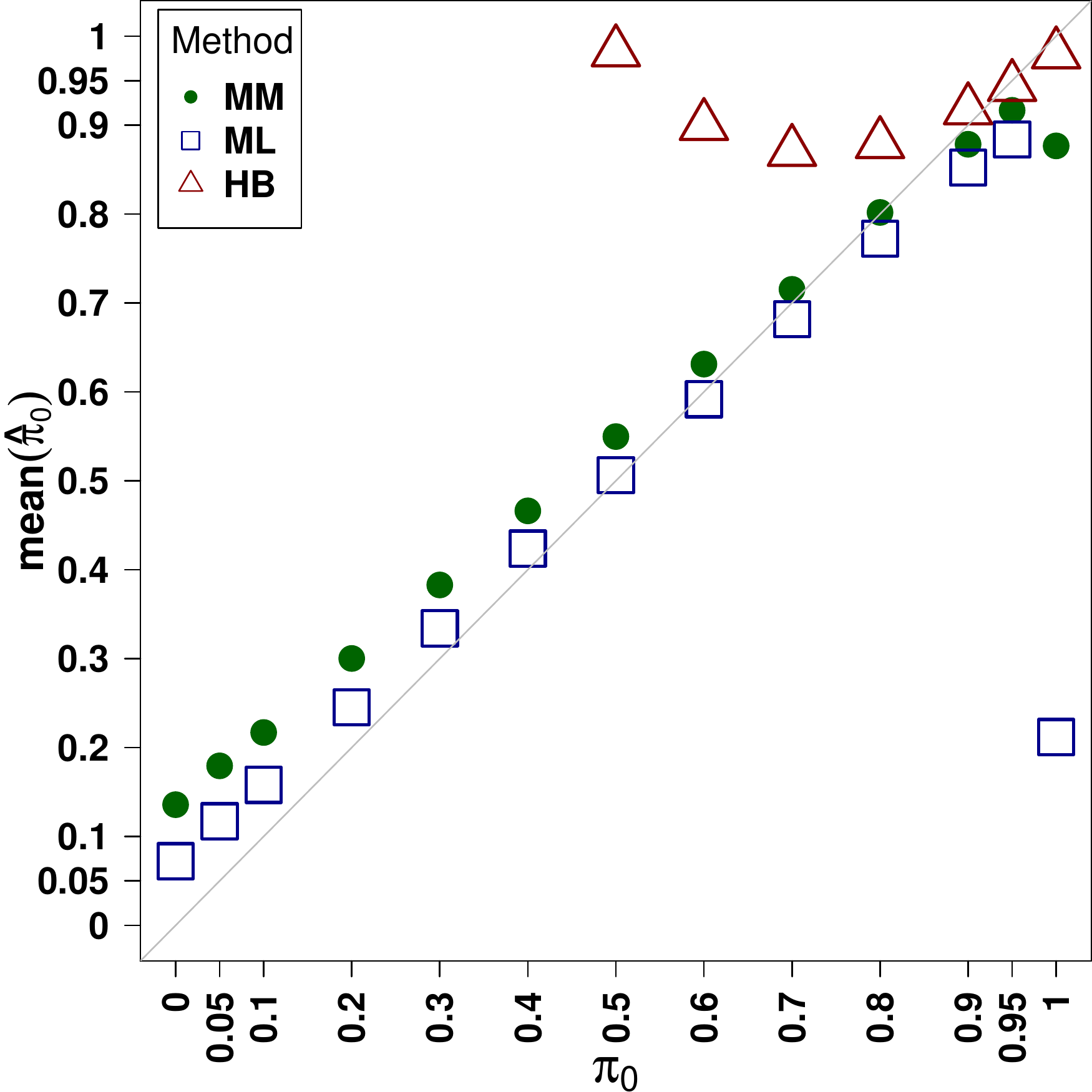}
}
\subfloat[]{
  \includegraphics[width=60mm]{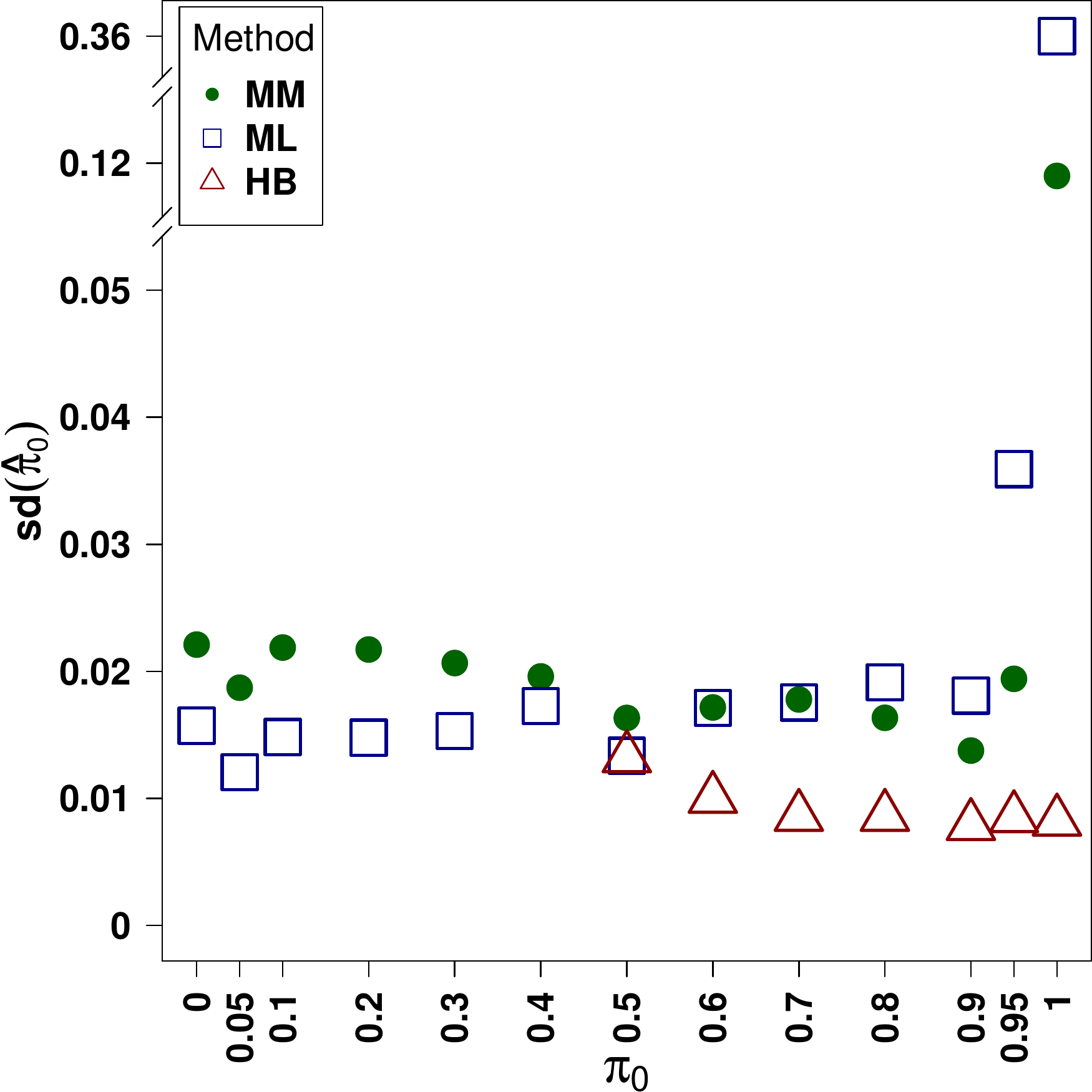}
}
\caption{(a) Plots of (a) mean($\widehat{\pi}_0$), (b) sd($\widehat{\pi}_0$), computed based on the MM, ML and HB approaches for different values of $\pi_{0}$ in Algorithm \ref{alg2}.}\label{fig:sim2_mean_sd_pi0}
\end{figure}

  \begin{figure}
\centering
\subfloat[]{
  \includegraphics[width=60mm]{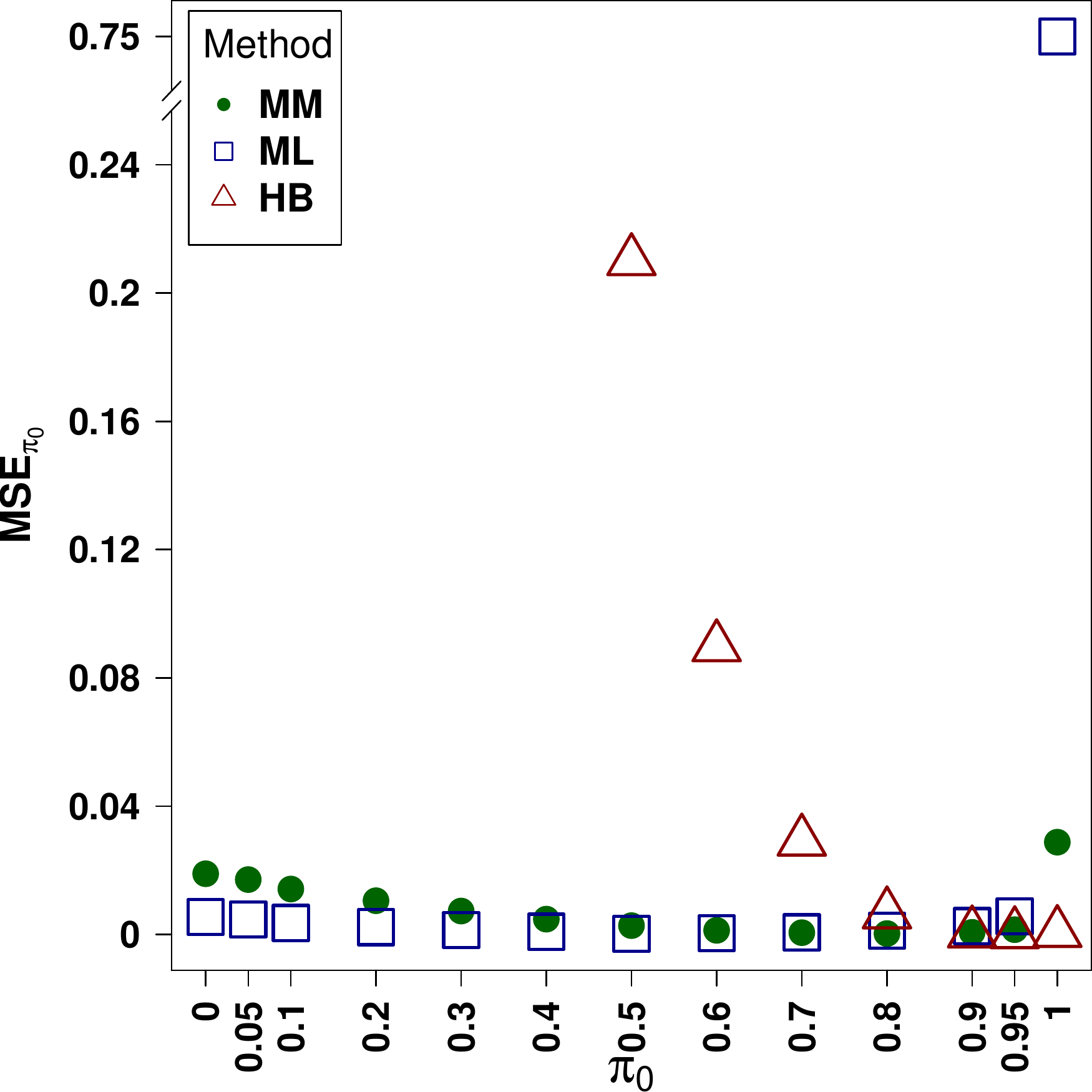}
}
\subfloat[]{
  \includegraphics[width=60mm]{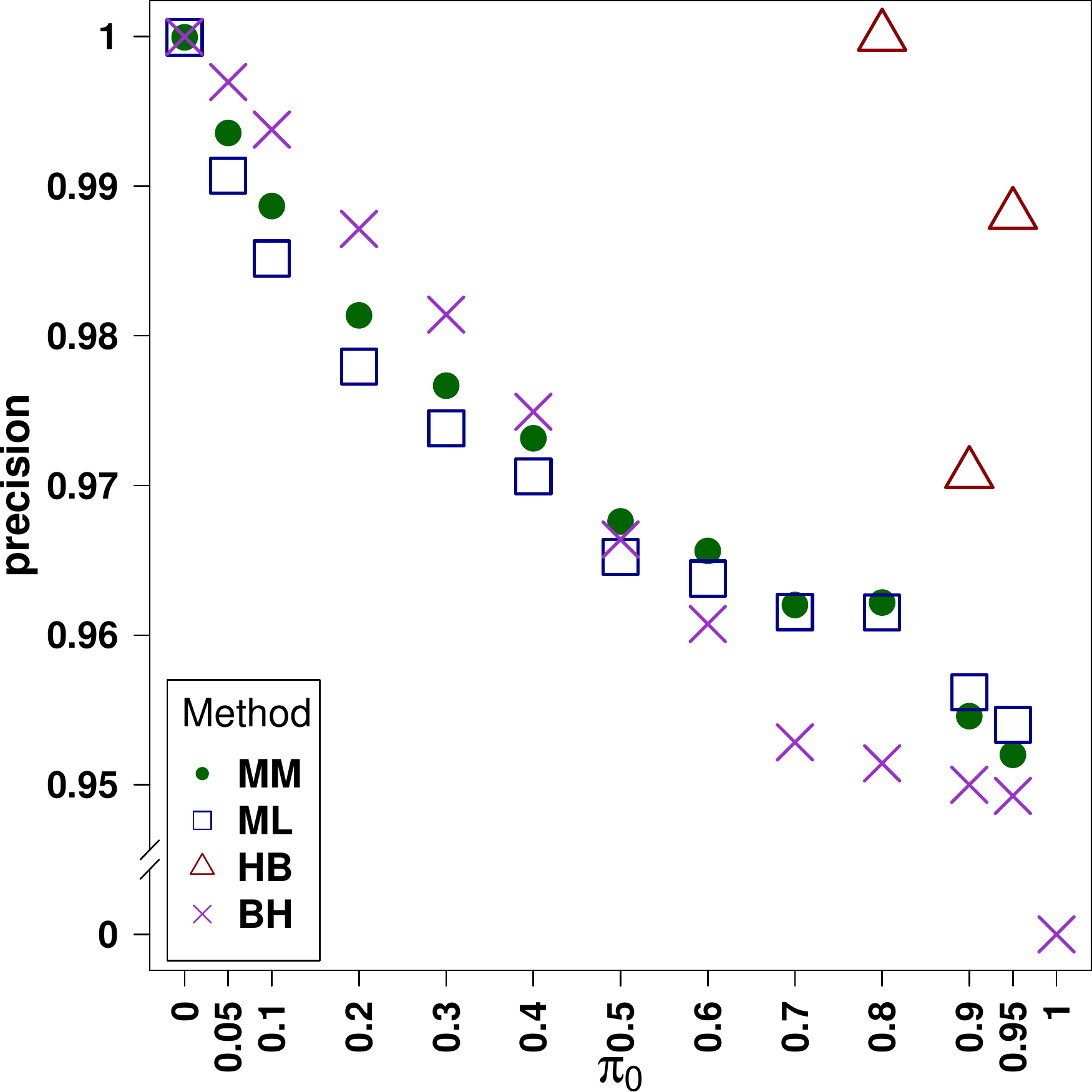}
}
\caption{(a) Plots of (a) $MSE_{\pi_0}$, (b) precision, for different values of $\pi_{0}$ in Algorithm \ref{alg2}.}\label{fig:sim2_MSE_precision}
\end{figure}

\section{Applications}\label{sec:6}

\subsection{Application to a comprehensive coronary artery disease data set}
In this subsection, we apply the proposed LFDR estimation approach to analyze a comprehensive 1000 genomes-based genome-wide association data, which was originally analyzed by \citet{Nikpay2015}.  \textcolor{reviewer2}{The data set consists of 60801 coronary artery disease (CAD) cases and 123504 controls,} and contains approximately 6.7 million variants with a minor allele frequency of greater than 0.05 and approximately 2.7 million variants with an allele frequency ranging between 0.005 and 0.05.  
The corresponding publicly available data consists of 9,455,777 SNPs with different information such as SNP name,  chromosome name, effect allele, non-effect allele, frequency of effect allele, logistic regression coefficient ($\widehat\beta$) with the corresponding standard deviation ($\widehat{\text se}_{\widehat\beta}$), and p-value. 

 \textcolor{reviewer2}{According to \citet{Nikpay2015}, a threshold of $5\times 10^{-8}$ for p-value led to 2,213 total variants to be significantly associated with CAD.  \citet{Nikpay2015} also using the BH approach reported that, the chosen p-value threshold corresponds to an FDR q-value of $2.1\times10^{-4}$. We observed in the previous section that such a classic approach is outperformed by the MM, ML and HB approaches. Here we are interested in verifying how the threshold of LFDR changes when alternative LFDR estimation methods are applied.}
 
 \textcolor{reviewer2}{To implement our proposed MM method, for each SNP $i$, we took the test statistic $x_i$ to be $\left(\frac{\widehat\beta_i}{\widehat{\text se}_{\beta_i}}\right)^2$. Then, equation (\ref{p0.hat}) with a processing time of 4.385 seconds on a personal computer (core i7, 3.5 GHz speed with 16 GB of RAM) led to $\widehat{\pi}_0= 0.9967$ and $\widehat{\lambda}= 21.9274$. By these estimates, estimated LFDRs for the 2,213 variants had a maximum of 0.0003. This suggests that there is strong evidence for the association of the 2,213 SNPs with the CAD. }
 
 \textcolor{reviewer2}{The HB approach led  to $\widehat{\pi}_0=0.9776$ (with a model misfit warning), and the ML approach with $c=0$ and $d=30$ in equation (\ref{ML_arg}), led to $\widehat{\pi}_0=0.9958$ and $\widehat{\lambda}_0=12.4034$. The HB and ML methods resulted in a maximum of 0.1288 and 0.0011 for estimated LFDRs correspondent to those 2,213 variants, respectively. The processing time for the HB method was 3.411 seconds while the ML method took 445.237 seconds to output the results. As we observe, the MM method, compared to the HB and ML methods, led to the lowest maximum of estimated LFDRs for those 2,213 SNPs. This may suggest that the MM method is more reliable in detecting the association of those SNPs than the ML and HB ones.}

 \textcolor{reviewer2}{It is remarkable to add that choosing a threshold for LFDRs in real data sets is a challenge. For example, if in the above CAD data set one chooses 0.01 as the LFDR threshold, the MM method identifies 4221 associated SNPs, which is almost twice as many as the number of associated SNPs when the threshold was 0.0003. Obviously, there is more confidence in identifying those 2213 associated SNPs rather than the new 4221 SNPs. In fact, the smaller is a chosen threshold, the higher is the confidence in identifying associated SNPs. The 0.01 LFDR threshold using the ML and HB methods leads to 3795 and 1716 associated SNPs, respectively. }



\subsection{Application to a microarray data set}
Although we presented our results for GWAS data, one may apply them in other contexts such as RNA gene expression studies. To illustrate this, we use a prostate data set used by \citet{Efron2012}, in which  genetic expression levels for 6033 genes were obtained for 102 men including 50 normal control individuals and 52 prostate cancer patients. The interest is to test whether there is any difference between gene expression levels and the prostate and normal individuals, $i=1,\ldots, 6033$. 

Let $\bar{y}_i^{(1)}$ and $\bar{y}_i^{(2)}$ be the mean of the normal individuals and cancer patients, and suppose that $s_i$ is an estimate of the pooled sample standard error. To conduct this multiple hypothesis testing problem, the two-sample $t$-test statistics $t_i=\frac{\bar{y}_i^{(1)}-\bar{y}_i^{(2)}}{s_i}$ need to be computed first.  One may then convert these test statistics to standard normal statistics $z_i=\Phi^{-1}(F_{100}(t_i))$, where $\Phi$ and $F_{100}$ are the cumulative distribution functions of normal distribution and $t$ distribution with 100 degrees of freedom, respectively. By this transformation, the null hypothesis can be expressed as $H_{0i}:z_i\sim N(0,1)$. Now, to apply our proposed method, it suffices to use the transformation $x_i=z_i^2$. Then, the multiple hypothesis testing problem reduces to testing $H_{0i}:x_i\sim \chi_1^2$.  \textcolor{reviewer2}{From equation (\ref{p0.hat}), the MM method with a processing time of 0.004 seconds on a personal computer (core i7, 3.5 GHz speed with 16 GB of RAM) led to $\widehat\pi_0=0.9364$ and $\widehat\lambda=4.5240$. The ML approach with $c=0$ and $d=10$ in equation (\ref{ML_arg}) and a processing time of 0.370 seconds led to $\widehat\pi_0=0.9443$ and $\widehat\lambda=4.9472$. The HB approach with a processing time of 0.069 seconds led to $\widehat\pi_0=0.9315$.}

 \textcolor{reviewer2}{Taking 0.01 as the LFDR threshold, the MM approach identified one gene to be differentially expressed, and the HB and ML approaches identified one and two differentially expressed genes, respectively. We also examined the impact of LFDR threshold change in the number of differentially expressed genes. For example, a 0.05 threshold led both the MM and ML approaches to identically identifying 13 differentially expressed genes, while it led the HB approach to identify 3 genes only.}


\section{Discussion and concluding remarks}\label{sec:7}
In this paper, we investigated estimating LFDRs for genetic association data. By reviewing well-known measures of association in the literature, we showed that many of the currently used measures reduce to a chi-square model with one degree of freedom. We presented a simple LFDR estimation strategy by using the MM estimators of the proportion $\pi_0$ and non-centrality parameter $\lambda$. The approach, as presented in Theorem \ref{thm2} as well as Section \ref{sec:6}, is simple and fast to apply. Also, as demonstrated  by the two simulation strategies in Section \ref{sec:5} and the real data analyses in Section \ref{sec:6}, it leads to reliable estimates. On the other hand, the ML approach of \citet{Marta2012} highly depends on the bounds of $c$ and $d$ in (\ref{ML_arg}), is time consuming, and the processing time increases with the number of SNPs as well as the length on the interval $[c,d]$. The HB approach of  \citet{Efron2012} also depends on some preset parameters such as the number of breaks in the discretization of the $z$-scores, the degrees of freedom for fitting the estimated density, etc., and it may fail due to model misfit.

 \textcolor{reviewer1}{As a limitation of our proposed method, it uses a parametric model. Another limitation is that the parametric model relies on the assumption that all non-null features have the same non-centrality parameter.  This might not seem biologically realistic, but there are important advantages behind. This assumption makes the estimation procedure easy and straightforward. Of course, having different non-centrality parameters in the model makes it more biologically realistic, but that would rise the issue of interpretability as well as estimation complexity. A single non-centrality parameter in the model is in fact  a measure of the detectability of associations \citep{Bukszar2009}. It can also be interpreted as the average deviation of the data distributions of SNPs associated  with a disease from the data distribution of those unassociated SNPs \citep{Yang2013}. On the other hand, as discussed in Section \ref{sec:5}, it seems that the accuracy of LFDR estimation in both the MM and ML methods relies on the accuracy of estimating $\pi_0$ rather than $\lambda$. Therefore, having the same non-centrality parameter in the proposed model should not significantly impact the final list of associated SNPs.}

It is remarkable that our proposed approach is similar to the classic hypothesis testing in the sense that the test statistic $x_i$ is compared to a threshold. However, the threshold in the MM approach is $h_{u}(\widehat{\pi}_{0},\widehat{\lambda})$, while in the classic hypothesis testing it is just a $100(1-\frac \alpha 2)\%$  (or in some cases $100(1-\alpha)\%$) quantile of the underlying distribution. This ideal property provides a more user-friendly estimator of LFDR than the other existing approaches.

It is worth adding that, estimating LFDRs in the literature is usually done by using some algorithms without knowing explicit form of estimators of the underlying parameters. For example, in the ML estimation used by \citet{Marta2012}, an algorithm is applied to find arguments that maximize the likelihood function numerically, without  providing any closed form of the resulting estimators of the parameter. Such algorithms may also require some unrealistic assumptions such as independency. On the contrary, our proposed approach offers explicit forms of estimators of the parameters $\pi_0$ and $\lambda$, without imposing any restriction to the model.  This leads to user-friendly estimators using the simple Bayes rule provided in equation (\ref{eq:NewBayes rule 0-1 loss-1}). All a user needs is the test statistics and estimated values of $\pi_0$ and $\lambda$.

As discussed in our second real data analysis, the estimated value of $\pi_0$ using our proposed approach is very close to the HB estimate of \citet{Efron2012}. Obviously, this compliance confirms that the two approaches estimate $\pi_0$ very well, but this does not mean that the same threshold of estimated LFDR should be used. This is due to the fact that  \citet{Efron2012}'s LFDR estimation is based on the normal model while our proposed approach is on the basis of the chi-square model. Thus, if one is interested in using a 0.20 threshold when using Efron's HB approach, she/he might use a different one (maybe 0.1 or less) when applying our proposed approach.

\textcolor{reviewer2}{The estimation procedure presented in this paper can be used for other purposes, too. For example, \citet{karimnezhad2020incorporating} introduce LFDR estimation in presence of some additional information such as genetic annotations. They use the ML approach of  \citet{Marta2012} but one may be interested in applying our proposed estimation approach in their reference class problem. If so, new estimators of LFDR will be derived.}



\begin{acknowledgement}
The author is grateful to two anonymous reviewers for their constructive comments. Prostate data  are available online through http://statweb.stanford.edu/\\~ckirby/brad/LSI/datasets-and-programs/datasets.html. The data on coronary artery disease have been contributed by CARDIoGRAMplusC4D investigators 
  and have been downloaded from www.CARDIOGRAMPLUSC4D.ORG. The HB and ML based LFDR estimates have been computed using the \textit{locfdr} \citep{Rlocfdr} and \textit{LFDR.MLE} \citep{LFDR.ML} packages, respectively.
\end{acknowledgement}
\vspace*{1pc}



\end{document}